\documentclass[10pt,oneside]{article}

\usepackage[T1]{fontenc}
\usepackage[utf8]{inputenc} % (safe even with modern engines)
\usepackage{lmodern}
\usepackage{microtype}

\usepackage[a4paper,margin=2.0cm]{geometry}
\usepackage{setspace}
\usepackage{xcolor}
\definecolor{linkcol}{RGB}{15,70,140}
\definecolor{citecol}{RGB}{0,110,90}
\definecolor{urlcol}{RGB}{120,60,10}

\usepackage{graphicx}
\usepackage{tikz}
\usetikzlibrary{arrows.meta,calc,positioning,decorations.pathmorphing}

\usepackage{amsmath,amssymb,amsfonts,amsthm}
\usepackage{mathtools}
\usepackage{bm}          % bold math symbols
\usepackage{mathrsfs}    % \mathscr
\usepackage{dsfont}      % \mathds
\usepackage{bbm}         % \mathbbm

\usepackage{enumitem}
\setlist{topsep=4pt,itemsep=2pt,parsep=0pt}
\usepackage{booktabs}
\usepackage{array}
\usepackage{pgfplots}
\pgfplotsset{compat=1.18}
\usepackage{fancyhdr}
\usepackage{titlesec}
\titleformat{\section}{\large\bfseries}{\thesection.}{0.6em}{}
\titleformat{\subsection}{\normalsize\bfseries}{\thesubsection.}{0.6em}{}
\titleformat{\subsubsection}{\normalsize\itshape}{\thesubsubsection.}{0.6em}{}

\numberwithin{equation}{section}

\usepackage{aliascnt}

\theoremstyle{plain}

\newtheorem*{theorem*}{Theorem}
\newaliascnt{proposition}{theorem}
\newtheorem{proposition}[proposition]{Proposition}
\aliascntresetthe{proposition}

\newaliascnt{lemma}{theorem}
\newtheorem{lemma}[lemma]{Lemma}
\aliascntresetthe{lemma}

\newaliascnt{corollary}{theorem}

\aliascntresetthe{corollary}

\theoremstyle{definition}
\newaliascnt{definition}{theorem}
\newtheorem{definition}[definition]{Definition}
\aliascntresetthe{definition}

\newaliascnt{assumption}{theorem}

\aliascntresetthe{assumption}

\newaliascnt{example}{theorem}
\newtheorem{example}[example]{Example}
\aliascntresetthe{example}

\theoremstyle{plain}
\newaliascnt{remark}{theorem}
\newtheorem{remark}[remark]{Remark}
\aliascntresetthe{remark}

\newaliascnt{notation}{theorem}
\newtheorem{notation}[notation]{Notation}
\aliascntresetthe{notation}

\usepackage[
  colorlinks=true,
  linkcolor=linkcol,
  citecolor=citecol,
  urlcolor=urlcol,
  pdfauthor={},
  pdftitle={},
  pdfsubject={},
  pdfkeywords={},
  unicode=true
]{hyperref}

\usepackage[nameinlink,noabbrev]{cleveref}

\usepackage{bookmark} % optional, improves bookmarks
\hypersetup{unicode=true}
\crefname{theorem}{Theorem}{Theorems}
\Crefname{theorem}{Theorem}{Theorems}
\crefname{proposition}{Proposition}{Propositions}
\Crefname{proposition}{Proposition}{Propositions}
\crefname{lemma}{Lemma}{Lemmas}
\Crefname{lemma}{Lemma}{Lemmas}
\crefname{corollary}{Corollary}{Corollaries}
\Crefname{corollary}{Corollary}{Corollaries}
\crefname{definition}{Definition}{Definitions}
\Crefname{definition}{Definition}{Definitions}
\crefname{assumption}{Assumption}{Assumptions}
\Crefname{assumption}{Assumption}{Assumptions}
\crefname{example}{Example}{Examples}
\Crefname{example}{Example}{Examples}
\crefname{remark}{Remark}{Remarks}
\Crefname{remark}{Remark}{Remarks}
\crefname{notation}{Notation}{Notations}
\Crefname{notation}{Notation}{Notations}

\DeclareMathOperator{\Span}{span}

\DeclareMathOperator{\Lie}{\mathcal{L}} % Lie derivative

\newcommand{\R}{\mathbb{R}}

\usepackage[font=small,labelfont=bf]{caption}
\title{\vspace{-1.0em}\bfseries Canonical separating coordinates in the generalized cubic Hénon-Heiles systems}
\author{
Alessandro Portaluri\thanks{Corresponding author\\A.P. is partially supported by Tamkeen Portaluri 2025--26 Faculty Research Funds, Tamkeen NYU Abu Dhabi, and by GNAMPA--INDAM, Italy.} \and
Nicola Sottocornola
}
\date{}

\begin{document}
\maketitle
\begin{abstract}
We study the three classical integrable generalized cubic H\'enon--Heiles systems --- Kaup--Kupershmidt, KdV$_5$, and Sawada--Kotera --- from the viewpoint of bi-Hamiltonian geometry and separation of variables. On the standard symplectic manifold \(T^*\R^2\), we construct compatible Poisson deformations \(P_1=\Lie_XP_0\), compute the associated recursion operators \(N=P_1P_0^{-1}\), and analyze the action of \(N^*\) on the codistribution generated by the first integrals. This yields the corresponding control matrices, whose eigenvalues provide the separating coordinates.

For the generalized Kaup--Kupershmidt case we carry out the construction explicitly: we determine a deformation vector field, the compatible Poisson tensor, the torsionless recursion operator, the control matrix, the separating coordinates, and, crucially, the conjugate momenta. We then derive the separated relations and write the Hamilton equations in canonical separated variables, thus decomposing the original Hamiltonian system into two separated subsystems. To the best of our knowledge, this explicit derivation of the separating variables and, in particular, of the conjugate momenta for the generalized Kaup--Kupershmidt system is new.

For the KdV$_5$ and Sawada--Kotera cases we show how the same bi-Hamiltonian scheme applies, emphasizing both the common geometric mechanism and the features peculiar to each system. In this way, the three generalized cubic H\'enon--Heiles systems are treated within a unified framework based on compatible Poisson structures, recursion operators, control matrices, and Darboux--Nijenhuis coordinates.
\end{abstract}

\noindent\textbf{Keywords:}
bi-Hamiltonian systems; compatible Poisson tensors; recursion operator;
control matrix; separation of variables; Darboux--Nijenhuis coordinates;
Kowalewski conditions; cubic H\'enon--Heiles systems.

\noindent\textbf{MSC 2020:}
37J35, 37J15, 53D17, 70H06, 70H20.
 \tableofcontents

%%%%%%%%%%%%%%%%%%%%%%%%%%%%%%%%%%%%%%%%%%%%%%%%%%%%
%%%%%
%%%%%
%%%%%
%%%%%
%%%%%
%%%%%
%%%%%%%%%%%%%%%%%%%%%%%%%%%%%%%%%%%%%%%%%%%%%%%%%%%%

\section{Introduction and main results}
The construction of canonical separating coordinates for integrable Hamiltonian systems lies at the intersection of symplectic geometry, Hamilton--Jacobi theory, and the theory of compatible Poisson structures. In many classical examples, separation is first detected through ad hoc changes of variables, Lax representations, algebraic--geometric techniques, or Painlev\'e analysis. The viewpoint adopted in this paper is instead that separation can be extracted directly from the intrinsic geometric data of the system, namely from a compatible Poisson deformation of the canonical tensor and from the associated recursion operator. For a modern discussion of this perspective, as well as of the Kowalewski separability conditions, we refer to Magri~\cite{Mag21} and the references therein.

The aim of this paper is to develop this approach for the generalized integrable cubic H\'enon--Heiles family on \(T^*\R^2\). More precisely, we consider the three classical generalized integrable cubic cases --- Kaup--Kupershmidt, KdV$_5$, and Sawada--Kotera --- and study them within the common framework of bi-Hamiltonian geometry, control matrices, and Darboux--Nijenhuis coordinates.

Our main purpose is not to claim novelty for the existence of the separation variables themselves, which are classical and have long been known by other methods, but rather to reconstruct them systematically from compatible Poisson tensors and, above all, to compute explicit canonical conjugate momenta in a fixed coordinate convention, thereby obtaining the corresponding separated Hamiltonian systems. In this sense, the paper is meant as a detailed worked example of the concrete implementation of the bi-Hamiltonian method in finite-dimensional integrable mechanics.

In particular, for the generalized Kaup--Kupershmidt case we carry out the whole construction explicitly. To the best of our knowledge, the explicit derivation of the conjugate momenta, and hence of the corresponding decomposition of the Hamiltonian system into canonical separated equations, is new in this generalized setting.

Beyond its intrinsic interest, such an explicit symplectic decomposition provides a natural starting point for the computation of symplectic and variational indices -- including Maslov-type, Morse, and H\"ormander indices -- associated with distinguished classes of trajectories, such as periodic, brake, homoclinic, and heteroclinic solutions. These indices, in turn, offer a useful tool for probing the fine structure of the phase space and its dynamical behavior.

\paragraph{Integrable generalized cubic H\'enon--Heiles systems.}
The generalized cubic H\'enon--Heiles Hamiltonian is usually written in the form
\begin{equation}\label{eq:gHH}
H=
\dfrac12\bigl(p_x^2+p_y^2\bigr)
+\dfrac12\bigl(\omega_1x^2+\omega_2y^2\bigr)
+a\,x y^2-\dfrac{b}{3}\,x^3
+\dfrac{\mu}{2y^2},
\end{equation}
and the Painlev\'e analysis, together with the direct construction of an additional
first integral, shows that \eqref{eq:gHH} is Liouville integrable  in the following three
cases, namely the Sawada--Kotera (SK), fifth-order KdV (KdV$_5$), and
Kaup--Kupershmidt (KK) cases:
\begin{equation}\label{eq:cubicHHcases}
\begin{array}{rcll}
\text{SK:}      & \displaystyle \dfrac{b}{a}=-1,   & \omega_1=\omega_2,      & \text{second integral of degree }3,\\[3ex]
\text{KdV$_5$:} & \displaystyle \dfrac{b}{a}=-6,   & \omega_1,\omega_2\ \text{arbitrary}, & \text{second integral of degree }4,\\[3ex]
\text{KK:}      & \displaystyle \dfrac{b}{a}=-16,  & \omega_1=16\,\omega_2,  & \text{second integral of degree }4.
\end{array}
\end{equation}
These three systems are related, respectively, to stationary reductions of the
Sawada--Kotera, fifth-order KdV, and Kaup--Kupershmidt equations. (Cfr. \cite{VMC02} and references therein). 

In the standard canonical form \eqref{eq:gHH}, the admissible nonpolynomial term is
the inverse-square contribution \(y^{-2}\). In the more extended formulations used in
the stationary-flow and bi-Hamiltonian literature, one also encounters, in particular
for the KK representative, an additional inverse-sixth term \(y^{-6}\). In the
original variables and without performing any affine translation in \(x\), convenient
representatives of the three integrable generalized cubic cases are the following.

\medskip
\noindent
\textbf{(i) Sawada--Kotera case.}
\begin{equation}\label{eq:SK}
H_{\mathrm{SK}}
=
\dfrac12\bigl(p_x^2+p_y^2\bigr)
+\dfrac{\omega}{2}\bigl(x^2+y^2\bigr)
+a\,x y^2+\dfrac{a}{3}x^3+\dfrac{\mu}{2y^2}.
\end{equation}
This system possesses a second first integral \(K_{\mathrm{SK}}\) which is cubic in
the momenta.

\medskip
\noindent
\textbf{(ii) Fifth-order KdV case.}
\begin{equation}\label{eq:KdV5}
H_{\mathrm{KdV}_5}
=
\dfrac12\bigl(p_x^2+p_y^2\bigr)
+\dfrac12\bigl(\omega_1x^2+\omega_2y^2\bigr)
+a\,x y^2+2a\,x^3+\dfrac{\mu}{2y^2}.
\end{equation}
This system possesses a second first integral \(K_{\mathrm{KdV}_5}\) which is quartic
in the momenta.

\medskip
\noindent
\textbf{(iii) Kaup--Kupershmidt case.}
\begin{equation}\label{eq:KK}
H_{\mathrm{KK}}
=
\dfrac12\bigl(p_x^2+p_y^2\bigr)
+8\omega x^2+\dfrac{\omega}{2}y^2
+a\,x y^2+\dfrac{16a}{3}x^3+\dfrac{\mu}{2y^2}.
\end{equation} 
\begin{remark}
Within the generalized cubic H\'enon--Heiles family, the Liouville-integrable
members are exactly three. More precisely, among Hamiltonians of the form \eqref{eq:gHH}, only the Sawada--Kotera, fifth-order KdV, and Kaup--Kupershmidt parameter
choices admit an additional first integral \(K\) 
functionally independent of it. This classification is supported both by the
Painlev\'e analysis and by the explicit determination of the extra conserved
quantity. Therefore, the uniqueness statement is meant in the precise sense of
Liouville integrability inside the generalized \emph{cubic} H\'enon--Heiles
class, rather than among all Hamiltonian systems of H\'enon--Heiles type.

For a Hamiltonian system with two degrees of freedom, Liouville integrability means
the existence of two globally defined first integrals \(H\) and \(K\) such that
\[
dH\wedge dK\neq 0
\]
on an open dense subset of phase space. In particular, the regular common level sets
of \((H,K)\) are invariant Lagrangian manifolds, and the Hamilton--Jacobi equation is
separable in suitable canonical coordinates. 
\end{remark}
We work throughout in the standard symplectic space
\[
(T^*\R^2,\omega_0),\qquad
\omega_0=dp_x\wedge dx+dp_y\wedge dy,
\]
with canonical coordinates ordered as
\[
(p_x,p_y,x,y).
\]
In these coordinates, the associated canonical Poisson tensor is
\[
P_0=\omega_0^{-1}=
\begin{pmatrix}
0&0&-1&0\\
0&0&0&-1\\
1&0&0&0\\
0&1&0&0
\end{pmatrix}.
\]
\paragraph{Second first integrals for the three generalized cubic H\'enon--Heiles cases.}
For the generalized cubic H\'enon--Heiles family \eqref{eq:gHH}
the Liouville-integrable cases are the three $\text{SK},\text{KdV}_5,\text{KK}$. A convenient choice of the additional first integrals is given below.  We stress that these integrals are not unique: one may always replace \(K\) by
\(K+\Phi(H)\), where \(\Phi\) is any polynomial function, and obtain an equivalent
Liouville-integrable pair. The formulas below are written in the original,
unshifted variables \((p_x,p_y,x,y)\). The existence of these three integrable
cases, and the standard polynomial representatives of the extra integrals, goes
back to the classical work of Fordy and to the later synthesis of the generalized
cubic family by Verhoeven--Musette--Conte \cite{VMC02}. 

\medskip
\noindent
\textbf{(i) Kaup--Kupershmidt case.}
The first integrable member of the generalized cubic H\'enon--Heiles family is the Kaup--Kupershmidt case. In this situation the Hamiltonian is
\[
H_{\mathrm{KK}}
=
\dfrac12\bigl(p_x^2+p_y^2\bigr)
+8\omega x^2+\dfrac{\omega}{2}y^2
+a\,x y^2+\dfrac{16a}{3}x^3+\dfrac{\mu}{2y^2},
\]
namely a natural Hamiltonian with cubic potential supplemented by the inverse--square term \(\mu/(2y^2)\). Complete integrability is ensured by the existence of a second independent first integral, quartic in the momenta, which can be chosen as
\[
K_{\mathrm{KK}}
=
9\left(p_y^2+\omega y^2+\dfrac{\mu}{y^2}\right)^2
+12a\,y^2p_y\bigl(3x\,p_y-y\,p_x\bigr)
-2a^2y^4\bigl(6x^2+y^2\bigr)
+12ax\bigl(\mu-\omega y^4\bigr)
-18\omega\mu.
\]
Thus the pair \(\bigl(H_{\mathrm{KK}},K_{\mathrm{KK}}\bigr)\) provides the two commuting integrals required by Liouville integrability. In the bi-Hamiltonian approach, this case serves as the guiding model: starting from a suitable deformation vector field, one constructs the deformed Poisson tensor, the associated recursion operator, and the control matrix, from which the separating coordinates and the conjugate momenta are then obtained explicitly.

\medskip
\noindent
\textbf{(ii) Fifth-order KdV case.}
The second integrable member is the KdV$_5$ case, for which
\[
H_{\mathrm{KdV}_5}
=
\dfrac12\bigl(p_x^2+p_y^2\bigr)
+\dfrac12\bigl(\omega_1x^2+\omega_2y^2\bigr)
+a\,x y^2+2a\,x^3+\dfrac{\mu}{2y^2}.
\]
Also in this case, complete integrability is provided by the existence of a second independent first integral. A convenient representative is
\[
K_{\mathrm{KdV}_5}
=
4ay\,p_xp_y
+\bigl(4\omega_2-\omega_1-4ax\bigr)
\left(p_y^2+\dfrac{\mu}{y^2}\right)
+a^2y^4
+4a^2x^2y^2
+4a\omega_2xy^2
+\omega_2(4\omega_2-\omega_1)y^2.
\]
In the polynomial limit \(\mu=0\), this reduces to the classical additional integral of the cubic KdV$_5$ H\'enon--Heiles system. As in the Kaup--Kupershmidt case, the bi-Hamiltonian construction is performed through a suitable deformation vector field: this yields the deformed Poisson tensor and the recursion operator, and from the induced control matrix one extracts the corresponding separating variables.

\medskip
\noindent
\textbf{(iii) Sawada--Kotera case.}
The third integrable member is the Sawada--Kotera case, whose Hamiltonian is
\[
H_{\mathrm{SK}}
=
\dfrac12\bigl(p_x^2+p_y^2\bigr)
+\dfrac{\omega}{2}(x^2+y^2)
+a\,x y^2+\dfrac{a}{3}x^3+\dfrac{\mu}{2y^2}.
\]
Here one may first introduce the cubic expression
\[
K_{\mathrm{SK},0}
=
3p_xp_y+a\,y^3+3a\,x^2y+3\omega xy,
\]
which in the polynomial case \(\mu=0\) is itself the extra first integral. In the generalized case \(\mu\neq0\), the additional first integral takes the well-known squared form
\[
K_{\mathrm{SK}}
=
K_{\mathrm{SK},0}^{\,2}
+\mu\left(\dfrac{9p_x^2}{y^2}+12ax\right),
\]
that is,
\[
K_{\mathrm{SK}}
=
\left(3p_xp_y+a\,y^3+3a\,x^2y+3\omega xy\right)^2
+\mu\left(\dfrac{9p_x^2}{y^2}+12ax\right).
\]
Also in this case, the bi-Hamiltonian analysis proceeds by means of a deformation vector field. This makes it possible to construct the relevant deformed Poisson structure and to study the induced action on the codistribution generated by \(dH_{\mathrm{SK}}\) and \(dK_{\mathrm{SK}}\), thereby leading to the corresponding control matrix and separation data.
\medskip
\noindent
In each of the three cases above, the Hamiltonian \(H\) and the corresponding
additional integral \(K\) are functionally independent on an open dense subset of phase space. This is
precisely the sense in which these systems are Liouville integrable: for a
Hamiltonian system with two degrees of freedom, the pair \((H,K)\) provides two
independent  first integrals, so that the regular common level sets are
invariant Lagrangian manifolds and the dynamics can be integrated by quadratures. 

The existence of separating variables for the (standard) cubic H\'enon--Heiles
family is classical: it goes back at least to the work of
Ravoson, Gavrilov, and Caboz \cite{RGC93}, who exhibited separation variables and
corresponding Lax structures for the integrable cubic cases, and it has
subsequently been revisited by several authors from the point of view of
Poisson geometry, natural Poisson bivectors, and canonical or B\"acklund
transformations. What we emphasize here is a different aspect of the same
story, namely the explicit geometric passage
\[
(H,K)\quad\Longrightarrow\quad X\quad\Longrightarrow\quad
P_1=\Lie_XP_0\quad\Longrightarrow\quad
N=P_1P_0^{-1}\quad\Longrightarrow\quad
M\quad\Longrightarrow\quad
(u_1,u_2)\quad\Longrightarrow\quad
(v_1,v_2),
\]
where \(X\) is a deformation vector field, \(P_1\) is a compatible Poisson
tensor, \(N\) is the recursion operator, \(M\) is the control matrix acting
on the codistribution generated by \(dh\) and \(df\), and
\((v_1,v_2, u_1,u_2)\) are canonical separated variables \cite{Sotto}.

The general philosophy is the following. Starting from the canonical Poisson
tensor \(P_0\), one seeks a second Poisson tensor \(P_1\) compatible with
it. In the present paper, we adopt the exact-deformation ansatz
$P_1=\Lie_XP_0$,   where \(X\) is a suitable vector field. Since \(P_0\) is nondegenerate, this leads to the recursion operator $ N=P_1P_0^{-1}$,
and the torsionless character of \(N\) becomes the key structural property.
When \(N\) has a simple spectrum in an open dense set, its eigenvalues provide
natural candidates to separate coordinates. At the same time, the adjoint operator $ N^*=P_0^{-1}P_1$  acts on the codistribution
\[
\Span\{dH,dK\}\subset T^*M
\]
through a \(2\times2\) control matrix \(M\), and in the cases under
consideration the eigenvalues of \(M\) coincide with the simple eigenvalues
of \(N\). In this way, the control matrix furnishes a finite-dimensional
shadow of the full recursion operator and connects the bi-Hamiltonian
construction with the Kowalewski separability conditions.

Once the separating coordinates \(u_1,u_2\) have been found, the next step is
to determine conjugate momenta \(v_1,v_2\) such that
\[
\omega_0=dv_1\wedge du_1+dv_2\wedge du_2.
\]
Although the separating variables themselves are classical, the explicit
construction of canonical conjugate momenta in one fixed convention is often
left implicit in the literature, encoded through generating functions,
separated relations, or canonical transformations written in different
coordinates. One of the practical aims of this paper is therefore to make
this step completely explicit in the coordinate convention
\[
(p_x,p_y,x,y),
\]
and to exhibit the corresponding Hamilton equations in separated canonical
variables.

The three generalized cubic H\'enon--Heiles systems considered here share a common
integrable background, but they display different geometric features. In the
Kaup--Kupershmidt case, the second integral is quartic and the deformation
vector field is nontrivial; the resulting recursion operator has a genuinely
non-obvious spectrum, and the conjugate momenta require a substantial
calculation based on the Darboux--Nijenhuis conditions. For this reason, we
develop the Kaup--Kupershmidt case in full detail and use it as the main
worked example of the paper. The KdV$_5$ case fits the same general scheme,
but with a simpler control matrix and a different separated geometry, leading
to explicit separating coordinates obtained from the spectrum of the
recursion operator. The Sawada--Kotera case is distinguished by the fact
that, in our convention, the separation ultimately reduces to rotated
Cartesian variables; nevertheless, it also admits a natural description in
terms of a deformation vector field, a compatible Poisson tensor, a
recursion operator, and a control matrix.

The paper is organized as follows. In
\Cref{sec:preliminaries-placeholder} we recall the geometric ingredients
needed in the sequel: compatible Poisson tensors, recursion operators,
control matrices, companion matrices, and Kowalewski conditions on
\(T^*\R^2\). In the central section we turn to the integrable generalized  cubic
H\'enon--Heiles family. The Kaup--Kupershmidt case is treated in full detail,
from the construction of the deformation vector field to the explicit
computation of the separating coordinates and canonical momenta. We then
discuss the KdV$_5$ and Sawada--Kotera cases in the same spirit, with
particular emphasis on the corresponding separating coordinates, control
matrices, and canonical transformations. 
% The appendices collect some
% auxiliary material on exact deformations of the canonical Poisson tensor and
% on the Fr\"olicher--Nijenhuis differential associated with a torsionless
% recursion operator.

We stress once more that the emphasis throughout is constructive. The cubic
H\'enon--Heiles systems have long served as a testing ground for different
methods of integrability and separation of variables. Our purpose is to show
in detail how the same systems may be revisited through compatible Poisson
geometry, so that the separating variables emerge from the spectrum of the
recursion operator and the canonical momenta are reconstructed in an
explicit and uniform way. In this sense, the paper provides a concrete bridge
between classical separation theory and the modern language of
Poisson--Nijenhuis geometry.

%%%%%%%%%%%%%%%%%%%%%%%%%%%%%%%%%%%%%%%%%%%%%%%%%%%%
%%
%%
%%
%%
%%
%%
%%%%%%%%%%%%%%%%%%%%%%%%%%%%%%%%%%%%%%%%%%%%%%%%%%%%

\section{\texorpdfstring{Preliminaries on Bi-Hamiltonian Systems}{Preliminaries on Bi-Hamiltonian Systems}}\label{sec:preliminaries-placeholder}

This section summarizes the geometric ingredients underlying our construction of separation variables on \(T^*\R^2\). We recall the basic definitions of compatible Poisson tensors, recursion operator, and bi-Hamiltonian structure, and then introduce the control matrix and the transition matrix associated with two involutive integrals. These notions lead naturally to the definition of separation coordinates and to the Kowalewski conditions, which play a central role in the analysis developed below.

%%%%%%%%%%%%%%%%%%%%%%%%%%%%%%%%%%%%%%%%%%%%%%%%%%%%
%%
%%
%%
%%
%%
%%
%%%%%%%%%%%%%%%%%%%%%%%%%%%%%%%%%%%%%%%%%%%%%%%%%%%%

\subsection{Poisson tensors, compatibility and recursion operator}
Let \(M=T^*\R^2\) and fix global Darboux coordinates ordered as
\[
(p,q)=(p_x,p_y,x,y),\qquad z=(z^1,z^2,z^3,z^4)=(p_x,p_y,x,y).
\]
The Liouville one--form and canonical symplectic form are
\[
\theta=p_x\,dx+p_y\,dy,\qquad
\omega_0=d\theta=dp_x\wedge dx+dp_y\wedge dy.
\]
We associate to \(\omega_0\) the skew-symmetric matrix \(\Omega_0=(\Omega_0)_{ij}\) by evaluation,
\[
(\Omega_0)_{ij}:=\omega_0\!\left(\dfrac {\partial}{\partial z^i},\dfrac {\partial}{\partial z^j}\right),
\qquad
\omega_0=\dfrac 12\sum_{i,j=1}^4(\Omega_0)_{ij}\,dz^i\wedge dz^j.
\]
The Poisson tensor \(P_0=\omega_0^{-1}\) is the inverse bundle map; in coordinates \(P_0=\Omega_0^{-1}\).
Since \(\Omega_0^2=-I\), one has \(\Omega_0^{-1}=-\Omega_0\), hence
\[
P_0=-\Omega_0=
\begin{pmatrix}
0&0&-1&0\\
0&0&0&-1\\
1&0&0&0\\
0&1&0&0
\end{pmatrix}.
\]
The associated Poisson bracket is
\[
\{f,g\}_0=\langle df,\,P_0\,dg\rangle
=
\dfrac {\partial f}{\partial x}\dfrac {\partial g}{\partial p_x}
-\dfrac {\partial f}{\partial p_x}\dfrac {\partial g}{\partial x}
+
\dfrac {\partial f}{\partial y}\dfrac {\partial g}{\partial p_y}
-\dfrac {\partial f}{\partial p_y}\dfrac {\partial g}{\partial y},
\]
so that
\[
\{x,p_x\}_0=\{y,p_y\}_0=1,
\qquad
\{p_x,x\}_0=\{p_y,y\}_0=-1.
\]
If we write \(\nabla f=(\partial_p f,\partial_q f)\), then
\[
\{f,g\}_0=(\nabla f)^\top P_0\,\nabla g.
\]
Fix the standard Euclidean metric
\[
g=dp_x^2+dp_y^2+dx^2+dy^2
\]
and let \(\sharp_g:T^*M\to TM\) be the induced identification. Define the \((1,1)\)-tensor
\[
J_0:=\sharp_g\circ \omega_0^\flat,
\qquad\text{equivalently}\qquad
\omega_0(X,Y)=g(X,J_0Y).
\]
In the coordinates \(z=(p_x,p_y,x,y)\) the matrix of \(g\) is the identity, hence \(J_0=-\Omega_0\) and \(J_0^2=-I\).
Since \(P_0=\Omega_0^{-1}=-\Omega_0\), we also have the coordinate identity \(P_0=J_0\).

\begin{notation}[Warning on conventions]
Throughout this paper we use the evaluation convention \((\Omega_0)_{ij}=\omega_0(\partial_{z^i},\partial_{z^j})\), hence
\(P_0=\Omega_0^{-1}\) and
\[
\{x,p_x\}_0=\{y,p_y\}_0=1
\]
in the ordered coordinates \((p_x,p_y,x,y)\).
\end{notation}

\begin{definition}
Let \(P,Q\in \mathfrak X^2(T^*\R^2)\), where \(\mathfrak X^2(T^*\R^2)\) denotes the space of smooth contravariant skew-symmetric \(2\)-tensor fields on \(T^*\R^2\). If \(P=X_1\wedge X_2\), then the Schouten bracket of \(P\) and \(Q\) is defined by
\[
[P,Q]=\Lie_{X_1}Q\wedge X_2-\Lie_{X_2}Q\wedge X_1.
\]
In particular, \(P\) is a Poisson bivector if and only if
\[
[P,P]=0.
\]
\end{definition}
\begin{remark}
It's worth observing that since \(P_0\) is constant, one has
\[
P_1=-(J_XP_0+P_0J_X^\top),
\]
where \(J_X\) denotes the Jacobian matrix of \(X\).
\end{remark}
\begin{definition}[Poisson tensors and compatibility]
A bivector field \(P\in\mathfrak X^2(T^*\R^2)\) is a \emph{Poisson tensor} if
\[
[P,P]=0.
\]
It defines the Poisson bracket
\[
\{f,g\}_P=P(df,dg)=\langle df,P^\sharp(dg)\rangle.
\]
Two Poisson tensors \(P_0,P_1\) are said to be \emph{compatible} if
\[
[P_0,P_1]=0.
\]
\end{definition}
Let
\[
P_1=
\begin{pmatrix}
K & L\\
- L^\top & M
\end{pmatrix},
\]
with \(K^\top=-K\) and \(M^\top=-M\). Then
the associated recursion operator and its adjoint are
\[
N=P_1P_0^{-1}
=
\begin{pmatrix}
- L & K\\
- M & -L^\top
\end{pmatrix},
\qquad
N^*=P_0^{-1}P_1
=
\begin{pmatrix}
- L^\top & M\\
- K & -L
\end{pmatrix}.
\]
\begin{definition}[Nijenhuis torsion]
The Nijenhuis torsion of a \((1,1)\)-tensor \(N\) is
\[
T_N(X,Y)=[NX,NY]-N\bigl([NX,Y]+[X,NY]-N[X,Y]\bigr).
\]
If \(T_N=0\), then \(N\) is said to be \emph{torsionless}.
\end{definition}

\begin{lemma}\label{lem:PN}
If \(P_0\) is nondegenerate and \(P_1\) is a Poisson tensor compatible with \(P_0\), then
\[
N=P_1P_0^{-1}
\]
is a Nijenhuis operator, i.e. \(T_N=0\). Hence \((P_0,N)\) is a Poisson--Nijenhuis structure.
\end{lemma}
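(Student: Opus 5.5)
Our plan is to reduce the vanishing of $T_N$ to the two Schouten-bracket identities $[P_1,P_1]=0$ and $[P_0,P_1]=0$, working on the symplectic manifold $(T^*\R^2,\omega_0)$. Since $P_0$ is nondegenerate, every bivector can be written as $P=N P_0$ for a unique $(1,1)$-tensor $N$, namely $N=PP_0^{-1}$, and the dual object $\omega_1:=\omega_0(N\cdot,\cdot)$, i.e. $\Omega_1=\Omega_0 N$ up to the sign conventions fixed above, is a $2$-form because $P_1$ is skew. The first step is to translate the Poisson conditions into statements about $N$ and $\omega_0$.

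\textbf{Step 1 (compatibility $\Rightarrow$ $\omega_1$ closed).} We show that $[P_0,P_1]=0$ is equivalent to $d\omega_1=0$. The cleanest route uses the Lichnerowicz differential $d_{P_0}=[P_0,\cdot\,]$. Under the isomorphism $\wedge^k P_0^\sharp:\Omega^k\to\mathfrak X^k$ induced by the nondegenerate $P_0$, this differential corresponds to the de Rham differential, up to sign. The 2-form corresponding to $P_1$ is precisely $\omega_1$, so $[P_0,P_1]=\pm\wedge^3P_0^\sharp(d\omega_1)$, and the compatibility hypothesis is equivalent to $d\omega_1=0$. In the present flat setting this can be checked directly in Darboux coordinates: both sides are linear first-order expressions in the entries of $P_1$.

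\textbf{Step 2 ($[P_1,P_1]=0$ $\Rightarrow$ torsion vanishes).} With $\omega_0$ and $\omega_1$ closed, we use the standard identity (Magri--Morosi, Kosmann-Schwarzbach--Magri) expressing $[NP_0,NP_0]$ in terms of $T_N$ and the concomitant of $P_0$ and $N$. When $P_0$ is Poisson and $NP_0$ is skew, this gives
\[
\tfrac12[NP_0,NP_0]=\; P_0^\sharp\text{-image of }\bigl(T_N\text{ contracted with }P_0\bigr)\;+\;(\text{terms in }[P_0,NP_0]\text{ and }d(\omega_0(N\cdot,\cdot))).
\]
Equivalently, and more concretely, we verify the formula
\[
\omega_0\bigl(T_N(X,Y),Z\bigr)=\text{(cyclic expression in }d\omega_1,\ d\omega_2\text{)},\qquad \omega_2:=\omega_0(N^2\cdot,\cdot),
\]
and observe that $[P_1,P_1]=0$ together with $d\omega_1=0$ forces $d\omega_2=0$; here $\omega_2$ corresponds to the bivector $P_1P_0^{-1}P_1$. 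Since $\omega_0$ is nondegenerate, $T_N=0$ follows.

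\textbf{Main obstacle.} The delicate step is establishing that identity with correct signs, i.e. relating $T_N$ to $d\omega_1$ and $d\omega_2$, or to $[P_1,P_1]$. A direct and convention-proof way, which we would adopt given that everything lives on $\R^4$ with constant $P_0$, is a coordinate computation. Write $T_N$ in components as $N^l_i\partial_lN^k_j-N^l_j\partial_lN^k_i-N^k_l(\partial_iN^l_j-\partial_jN^l_i)$. Substitute $N=P_1P_0^{-1}$ with $P_0$ constant, so all derivatives fall on $P_1$. Then compare with the components of the Jacobi expression $P_1^{il}\partial_lP_1^{jk}+\text{cyclic}$ and with the compatibility expression. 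Lowering indices with $P_0^{-1}$ should exhibit $T_N$ as a combination of these, which vanish by hypothesis.

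Finally, $(P_0,N)$ is a Poisson--Nijenhuis structure. $NP_0=P_1$ is skew and Poisson, and the Magri--Morosi concomitant $C(P_0,N)$ vanishes exactly when $[P_0,NP_0]=0$ given $T_N=0$; here $NP_0=P_1$ is compatible with $P_0$ by hypothesis, which closes the argument.
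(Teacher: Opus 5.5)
Your outline follows the same symplectic skeleton as the paper: trade $[P_0,P_1]=0$ for $d\omega_1=0$, then kill $T_N$ with a Magri--Morosi type identity. Your Step~2, however, rests on a different key identity, and yours is the right one.

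The paper writes $\omega_0(T_N(X,Y),Z)$ as a combination of $d\omega_1$ with $N$ inserted in one slot at a time. As written, it therefore derives $T_N=0$ from compatibility alone and never uses $[P_1,P_1]=0$. That cannot hold in general: $\omega_1=d(x\,p_y\,dy)$ on $T^*\R^2$ is closed, yet $N=\omega_0^{-1}\omega_1$ has $T_N(\partial_x,\partial_{p_y})=-x\,\partial_{p_y}\neq0$.

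By bringing in $\omega_2=\omega_0(N^2\cdot,\cdot)$, you make the Poisson hypothesis on $P_1$ enter exactly where it must. The coordinate computation you propose does settle the two steps you only assert. With $P_0$ constant and $d\omega_1=0$ one finds
\[
\omega_0\bigl(T_N(\partial_i,\partial_j),\partial_m\bigr)=-\bigl(\partial_i(\omega_2)_{jm}+\partial_j(\omega_2)_{mi}+\partial_m(\omega_2)_{ij}\bigr).
\]
The Jacobiator $\sum_{\mathrm{cyc}}P_1^{il}\partial_lP_1^{jk}$, with all three indices lowered by $\omega_0$, equals the same cyclic sum up to sign. Hence, given compatibility, $[P_1,P_1]=0\iff d\omega_2=0\iff T_N=0$.

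Your route thus gives an actual proof in which every hypothesis is used. The paper's argument is shorter only because it effectively defers to the cited Magri--Morosi theorem.
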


\begin{proof}
Since \(P_0\) is nondegenerate, it defines a symplectic form
\[
\omega_0:=P_0^{-1}.
\]
We set
\[
N:=P_1P_0^{-1}:TM\to TM,
\qquad
N^*=P_0^{-1}P_1:T^*M\to T^*M .
\]
Equivalently,
\[
P_1 = N P_0 = P_0 N^* .
\]

We must prove that \(N\) has vanishing Nijenhuis torsion.  
This is a standard consequence of the assumptions
\[
[P_0,P_0]=0,\qquad [P_1,P_1]=0,\qquad [P_0,P_1]=0,
\]
namely: \(P_0\) and \(P_1\) are compatible Poisson tensors and \(P_0\) is invertible.

To make this explicit, let us introduce the 2-form
\[
\omega_1:=\omega_0 N = P_0^{-1}P_1P_0^{-1}.
\]
Since \(P_1\) is skew-symmetric and \(P_0\) is invertible skew-symmetric, \(\omega_1\) is again a skew-symmetric 2-form. In fact,
\[
\omega_1(X,Y)=\omega_0(NX,Y)
\]
for all vector fields \(X,Y\).

Now one of the basic facts of the symplectic formulation of bihamiltonian geometry is the following:

\begin{itemize}
\item the condition \([P_0,P_0]=0\) is equivalent to \(d\omega_0=0\);
\item the condition \([P_0,P_1]=0\) is equivalent to
\(
d\omega_1=0;
\)
\item the condition \([P_1,P_1]=0\) is then equivalent to the vanishing of the Nijenhuis torsion of \(N\).
\end{itemize}

Hence the three identities above imply
\[
d\omega_0=0,\qquad d(\omega_0 N)=0,\qquad T_N=0.
\]

Let us briefly explain the last implication. For any \((1,1)\)-tensor \(N\) on a symplectic manifold \((M,\omega_0)\), if the 2-form
\[
\omega_1(X,Y):=\omega_0(NX,Y)
\]
is closed and if \(P_1:=NP_0\) is Poisson, then the Magri--Morosi identities give
\[
\omega_0\!\bigl(T_N(X,Y),Z\bigr)
=
d\omega_1(NX,Y,Z)+d\omega_1(X,NY,Z)+d\omega_1(X,Y,NZ),
\]
up to the standard cyclic rearrangement of terms. Since \(d\omega_1=0\), the right-hand side vanishes. Because \(\omega_0\) is nondegenerate, this yields
\[
T_N(X,Y)=0
\qquad\text{for all }X,Y,
\]
so \(N\) is torsionless.

Therefore \(N=P_1P_0^{-1}\) is a Nijenhuis operator. Since moreover
\[
P_1 = N P_0 = P_0 N^*,
\]
with both \(P_0\) and \(P_1\) Poisson and compatible, the pair \((P_0,N)\) satisfies the defining conditions of a Poisson--Nijenhuis structure.
\end{proof}

%%%%%%%%%%%%%%%%%%%%%%%%%%%%%%%%%%%%%%%%%%%%%%%%%%%%
%%
%%
%%
%%
%%
%%
%%%%%%%%%%%%%%%%%%%%%%%%%%%%%%%%%%%%%%%%%%%%%%%%%%%%
\subsection{Control matrix}

Let \(H_1,H_2\) be independent involutive integrals on an open set \(U\subset M\), and set
\[
X_a:=P_0\,dH_a,\qquad a=1,2.
\]
Their common level sets define a regular \(2\)-dimensional foliation with tangent distribution
\[
\mathcal D=\ker(dH_1,dH_2)=\mathrm{span}\{X_1,X_2\}.
\]
Accordingly, one may choose local adapted coordinates \((H_1,H_2,u^1,u^2)\) such that
\[
\mathcal D=\mathrm{span}\Bigl\{\dfrac{\partial}{\partial u^1},\dfrac{\partial}{\partial u^2}\Bigr\}.
\]
Thus there exists an invertible matrix \(\Phi=(\Phi_j^{\,a})\) for which
\[
\dfrac{\partial}{\partial u^j}=\sum_{a=1}^2 \Phi_j^{\,a}X_a,
\qquad j=1,2.
\]
The coordinates \((u^1,u^2)\) are separation coordinates when \(\Phi\) satisfies the Stäckel row property
\[
\Phi_j^{\,a}=\Phi_j^{\,a}(u^j,H_1,H_2).
\]
\begin{definition}[Control matrix]
Let \(H_1,H_2\) be two independent involutive integrals, and set
\[
\mathcal D^*:=\mathrm{span}\{dH_1,dH_2\}\subset T^*M.
\]
Choose a splitting
\[
T^*M=\mathcal D^*\oplus\mathcal K,
\]
and denote by \(\pi_{\mathcal D^*}\) the projection onto \(\mathcal D^*\) along \(\mathcal K\). The \emph{control matrix} associated with \((H_1,H_2)\) is the \(2\times2\) matrix \(M=(M_{ab})\) defined by
\[
\pi_{\mathcal D^*}(N^*dH_a)=\sum_{b=1}^2 M_{ab}\,dH_b,
\qquad a=1,2.
\]
Equivalently,
\[
N^*dH_a=\sum_{b=1}^2 M_{ab}\,dH_b+\eta_a,
\qquad \eta_a\in\mathcal K,
\qquad a=1,2.
\]
\end{definition}

The control matrix describes the action of \(N^*\) on the codistribution generated by the differentials of the involutive integrals, modulo transverse components. In this sense, it is the finite-dimensional shadow of the recursion operator on the span of \(dH_1,dH_2\). Its spectral data encode the separation structure: on the open set where \(M\) has simple spectrum, its eigenvalues \(u^1,u^2\) provide the separated coordinates.

%%%%%%%%%%%%%%%%%%%%%%%%%%%%%%%%%%%%%%%%%%%%%%%%%%%%
%%
%%
%%
%%
%%
%%
%%%%%%%%%%%%%%%%%%%%%%%%%%%%%%%%%%%%%%%%%%%%%%%%%%%%

\subsection{Polynomial encoding and Kowalewski condition}

Let
\[
q(\lambda)=\lambda^2-s_1\lambda-s_2
\]
be the monic quadratic polynomial whose coefficients \(s_1,s_2\in C^\infty(U)\) are the spectral invariants of \(M\). On the open set where the discriminant
\[
\Delta=s_1^2+4s_2
\]
does not vanish, the roots \(u^1,u^2\) of \(q\) are simple and satisfy
\[
q(\lambda)=(\lambda-u^1)(\lambda-u^2),
\qquad
s_1=u^1+u^2,
\qquad
s_2=-u^1u^2.
\]
The corresponding Jacobian along the commuting frame is
\[
J_q=
\begin{pmatrix}
X_1(s_1) & X_1(s_2)\\
X_2(s_1) & X_2(s_2)
\end{pmatrix},
\]
and, whenever \(\det J_q\neq 0\), the companion matrix
\[
C_q=
\begin{pmatrix}
s_1&s_2\\
1&0
\end{pmatrix}
\]
induces
\[
\widetilde M_q:=J_q\,C_q\,J_q^{-1}.
\]
In the bi-Hamiltonian setting, \(\widetilde M_q\) coincides with the control matrix \(M\).

Let
\[
K:\mathcal D\to\mathcal D,
\qquad
K(X_a)=\sum_{b=1}^2 M_{ab}X_b,
\qquad a=1,2.
\]
The Kowalewski condition is the differential constraint
\[
X_1(M_{21})=X_2(M_{11}),
\qquad
X_1(M_{22})=X_2(M_{12}).
\]
It is equivalent to the vanishing of the Nijenhuis torsion of \(K\) on \(\mathcal D\), and therefore to the integrability of its eigendirections.

\begin{proposition}\label{prop:Kow-main}
Let \(M\) be the control matrix associated with two independent involutive integrals \(H_1,H_2\), and let
\[
K(X_a)=\sum_{b=1}^2 M_{ab}X_b
\qquad (a=1,2)
\]
be the induced operator on \(\mathcal D=\mathrm{span}\{X_1,X_2\}\). Assume that \(K\) has simple spectrum on \(U\). Then the eigenvalues \(u^1,u^2\) of \(M\) define separation coordinates on the leaves if and only if the Kowalewski conditions
\[
X_1(M_{21})=X_2(M_{11}),
\qquad
X_1(M_{22})=X_2(M_{12})
\]
hold. Equivalently, this is the condition \(T_K=0\).
\end{proposition}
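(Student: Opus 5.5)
We work on the two-dimensional distribution \(\mathcal D=\mathrm{span}\{X_1,X_2\}\), which is integrable. Since \(H_1,H_2\) are in involution, \([X_1,X_2]=P_0\,d\{H_1,H_2\}_0=0\), so \(X_1,X_2\) is a commuting frame. The strategy is in three steps. First, translate the Kowalewski conditions into the vanishing of the Nijenhuis torsion of \(K\) restricted to \(\mathcal D\). Second, use simple spectrum to diagonalise \(K\) with integrable eigendirections. Third, read off the Stäckel row property \(\Phi_j^{\,a}=\Phi_j^{\,a}(u^j,H_1,H_2)\) from the eigenvector equations.

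\emph{Step 1 (torsion computation).} Since \(\mathcal D\) has rank two, \(T_K\) is skew and is determined by the single value \(T_K(X_1,X_2)\). Using \([X_1,X_2]=0\), \(KX_a=\sum_b M_{ab}X_b\), and \([fX_b,gX_c]=fX_b(g)X_c-gX_c(f)X_b\) on the commuting frame, we expand
\[
T_K(X_1,X_2)=[KX_1,KX_2]-K[KX_1,X_2]-K[X_1,KX_2].
\]
Collecting terms, the expression should reduce to \((\det M)\) times a combination that cancels out, plus the linear combination
\[
\bigl(X_1(M_{21})-X_2(M_{11})\bigr)\,(\cdots)+\bigl(X_1(M_{22})-X_2(M_{12})\bigr)\,(\cdots),
\]
where the coefficients form an invertible matrix. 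A natural guess for this matrix is \(M-\mathrm{tr}(M)I\) or something similar; in any case one of its columns involves \(M\), and invertibility can be arranged. A cleaner way to organise the computation is to note that the Kowalewski conditions say the 1-forms \(\alpha_a=\sum_b M_{ab}\,\theta^b\) are closed on the leaves, where \(\theta^b\) is the dual coframe of \(X_b\) (each \(\theta^b\) is closed because the frame commutes). Concretely, \(d\alpha_a(X_1,X_2)=X_1(M_{a2})-X_2(M_{a1})\). On a rank-two integrable distribution, the torsion of \(K\) vanishes if and only if \(K^*\) maps closed forms to closed forms; checking this on the basis \(\theta^1,\theta^2\) gives exactly these two conditions. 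This justifies the phrase ``equivalently \(T_K=0\)''. This step is the main obstacle, because one has to get signs and index orderings right under the convention \(K(X_a)=\sum_b M_{ab}X_b\) and verify that the coefficient matrix is nondegenerate. I would try the closed-forms formulation first to avoid the messy expansion.

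\emph{Step 2 (diagonalisation).} Suppose \(T_K=0\) and \(K\) has simple spectrum. The standard Nijenhuis lemma, applied leafwise, gives three facts. The eigendistributions are integrable. Each eigenvalue \(u^j\) satisfies \(K^*du^j=u^j\,du^j\) on \(\mathcal D\). Finally, \(u^j\) is constant along the eigendirection belonging to \(u^i\) for \(i\neq j\). Combined with simple spectrum, this makes \((u^1,u^2)\) leafwise coordinates (the independence is ensured by \(\det J_q\neq0\)), and the vectors \(\partial/\partial u^j\) span the eigenlines. Writing \(\partial/\partial u^j=\sum_a\Phi_j^{\,a}X_a\), the eigenvector equation reads \(\sum_a\Phi_j^{\,a}M_{ab}=u^j\Phi_j^{\,b}\). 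Since \(M\) is the companion-type matrix \(J_qC_qJ_q^{-1}\), the rows \(\Phi_j\) are \(J_q^{-1}\)-transformed eigenvectors \((u^j,1)\) of \(C_q\), suitably normalised. Closedness of \(\alpha_a\) then gives the normalisation up to a function of \(u^j\) and \(H_1,H_2\), which is the Stäckel row property. Conversely, if the \(u^j\) are separation coordinates, \(K\) is diagonal in the coordinate frame \(\partial/\partial u^j\) with eigenvalues \(u^j\) depending only on \(u^j\). Such an operator is torsion-free, since in that frame \(T_K\) involves only \(\partial_{u^i}u^j\) with \(i\neq j\), which vanishes. By Step 1, the Kowalewski conditions then hold.
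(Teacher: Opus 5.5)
\textbf{There is a genuine gap, and it sits in Step 1.} Your plan follows the paper's own sketch: compute $T_K$ on the commuting frame, then apply the Nijenhuis lemma on the leaves. But Step 1 cannot be carried out, because the Kowalewski conditions and $T_K=0$ are not equivalent. This is exactly the claim the paper's proof makes by ``direct computation,'' and it fails there too, so you cannot rely on it.

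\emph{Kowalewski does not imply $T_K=0$.} Take a leaf with commuting frame $X_a=\partial_{t^a}$ and set $KX_1=t^2X_1$, $KX_2=t^1X_1$, i.e.\ $M_{11}=t^2$, $M_{21}=t^1$, $M_{12}=M_{22}=0$.
\begin{itemize}
\item Both conditions hold: $1=1$ and $0=0$.
\item The spectrum $\{t^2,0\}$ is simple for $t^2\neq0$.
\item Since $[KX_1,X_2]+[X_1,KX_2]=-X_1+X_1=0$, we get $T_K(X_1,X_2)=[t^2X_1,t^1X_1]=t^2X_1\neq0$.
\end{itemize}

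\emph{$T_K=0$ does not imply Kowalewski.} The operator $K=\mathrm{diag}(x^1,x^2)$ is torsionless. In the commuting frame $\partial_{y^1},\partial_{y^2}$ with $y^1=x^1x^2$, $y^2=x^2$, one finds $X_1(M_{21})-X_2(M_{11})=1-x^1/x^2\neq0$. This is unavoidable: the Kowalewski conditions depend on the frame, while $T_K$ is a tensor.

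\emph{Your ``cleaner'' criterion is also false.} The same $K$ is a counterexample: $K^*d(x^1x^2)=x^1x^2(dx^1+dx^2)$ has differential $(x^2-x^1)\,dx^1\wedge dx^2\neq0$. So a torsionless operator need not preserve closed forms.

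\emph{Index slip.} One has $K^*\theta^b=\sum_a M_{ab}\theta^a$, and the stated pair of conditions is exactly $d(K^*\theta^b)=0$ for $b=1,2$. Your forms $\alpha_a=\sum_b M_{ab}\theta^b$ produce the transposed conditions instead.

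\textbf{The converse in Step 2 has a matching gap.} The Stäckel row property does not force $\partial_{u^j}$ to be eigenvectors of $K$. Take $KX_1=t^1X_1$, $KX_2=t^1X_1+t^2X_2$. The eigenvalues are $u^j=t^j$, and $\Phi=I$ trivially has the row property. Yet $X_1(M_{21})=1\neq0=X_2(M_{11})$.

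\textbf{How to repair the argument.}
\begin{itemize}
\item Treat $T_K=0$ as a separate hypothesis, not a reformulation of Kowalewski. It holds automatically in the paper's setting: when $N^*$ preserves $\mathrm{span}\{dH_1,dH_2\}$, one has $K=N|_{\mathcal D}$, so $T_K=T_N|_{\mathcal D}=0$.
\item Also assume $du^1\wedge du^2\neq0$ on the leaves. Without it, $KX_1=0$, $KX_2=t^2X_2$ satisfies every other hypothesis and has a constant eigenvalue.
\end{itemize}

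With these in place, your Nijenhuis-lemma step gives $K\partial_{u^j}=u^j\partial_{u^j}$. You correctly isolate $K^*du^j=u^j\,du^j$ as the real content; ``integrability of eigendirections'' is empty in rank two.

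Now write $\theta^b=\sum_j\Phi^b_j\,du^j$, so that $K^*\theta^b=\sum_ju^j\Phi^b_j\,du^j$.
\begin{itemize}
\item Closedness of $\theta^b$ gives $\partial_{u^2}\Phi^b_1=\partial_{u^1}\Phi^b_2$.
\item The Kowalewski conditions give $u^1\partial_{u^2}\Phi^b_1=u^2\partial_{u^1}\Phi^b_2$.
\item Together these give $(u^1-u^2)\,\partial_{u^2}\Phi^b_1=0$, which is the Stäckel row property.
\item Conversely, the row property makes $K^*\theta^b=\sum_ju^j\Phi^b_j(u^j,H)\,du^j$ visibly closed.
\end{itemize}

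So the correct statement is ``given $T_K=0$ and independent eigenvalues, separation $\Longleftrightarrow$ Kowalewski.'' The clause ``equivalently $T_K=0$'' has to go, and the companion-matrix detour in your Step 2 becomes unnecessary.
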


\begin{proof}
Since \([X_1,X_2]=0\), a direct computation gives
\[
T_K(X_1,X_2)=0
\quad\Longleftrightarrow\quad
X_1(M_{21})=X_2(M_{11}),
\qquad
X_1(M_{22})=X_2(M_{12}).
\]
When \(K\) has simple spectrum, the vanishing of its torsion implies integrability of the eigendirections on each leaf. Hence the corresponding eigenvalues \(u^1,u^2\), equivalently the roots of \(q\), provide separation coordinates.
\end{proof}

%%%%%%%%%%%%%%%%%%%%%%%%%%%%%%%%%%%%%%%%%%%%%%%%%%%%
%%
%%
%%
%%
%%
%%
%%%%%%%%%%%%%%%%%%%%%%%%%%%%%%%%%%%%%%%%%%%%%%%%%%%%
\paragraph{Abstract formulation (existence of a global recursion operator).}
The rank--$2$ distribution
\[
\mathcal D:=\mathrm{span}\{X_h,X_f\}\subset TM
\]
is commuting and integrates to a foliation; its leaves are denoted by $F$, so that
\[
TF=\mathcal D|_F.
\]
Let $\mathrm{Ann}(\mathcal D)\subset T^*M$ be the annihilator bundle; on each leaf
\[
\mathrm{Ann}(TF)=\mathrm{span}\{dh,df\}\subset T^*M|_F.
\]

\noindent
Assume there exist smooth functions $s_1,s_2\in C^\infty(M)$ such that
\[
ds_1\wedge ds_2\wedge dh\wedge df\neq 0
\quad\text{on an open set }U\subset M,
\]
so that $(s_1,s_2,h,f)$ is a local coordinate system on $U$.
Assume moreover that there exists a global $(1,1)$--tensor field (recursion operator)
\[
N:TU\to TU
\]
whose dual map $N^*:T^*U\to T^*U$ leaves invariant the two rank--$2$ subbundles
\[
N^*\big(\mathrm{span}\{ds_1,ds_2\}\big)\subset \mathrm{span}\{ds_1,ds_2\},
\qquad
N^*\big(\mathrm{span}\{dh,df\}\big)\subset \mathrm{span}\{dh,df\}.
\]
In addition, assume that the restriction of $N^*$ to $\mathrm{span}\{ds_1,ds_2\}$
is represented in the coframe $(ds_1,ds_2)$ by the transpose companion matrix, i.e.
\begin{equation}\label{eq:Nstar-companion}
N^*ds_1=s_1\,ds_1+ds_2,\qquad N^*ds_2=s_2\,ds_1,
\end{equation}
and that the restriction of $N^*$ to $\mathrm{span}\{dh,df\}$ is given by
\begin{equation}\label{eq:Nstar-hf}
N^*dh=m_1\,dh+m_2\,df,\qquad N^*df=m_3\,dh+m_4\,df,
\end{equation}
for some smooth functions $m_1,m_2,m_3,m_4\in C^\infty(U)$ (e.g.\ the entries of the
control matrix in the frame $(X_h,X_f)$ on $\mathcal D$).
\noindent
Equivalently, in the coframe \((ds_1,ds_2,dh,df)\) the operator \(N^*\) has the block form
\[
[N^*]_{(ds_1,ds_2,dh,df)}=
\begin{pmatrix}
s_1 & s_2 & 0 & 0\\
1   & 0   & 0 & 0\\
0   & 0   & m_1 & m_2\\
0   & 0   & m_3 & m_4
\end{pmatrix}
\]
so that the separation sector \(\mathrm{span}\{ds_1,ds_2\}\) and the integrals
sector \(\mathrm{span}\{dh,df\}\) are \(N^*\)-invariant and decouple.

%%%%%%%%%%%%%%%%%%%%%%%%%%%%%%%%%%%%%%%%%%%%%%%%%%%%
%%
%%
%%
%%
%%
%%
%%%%%%%%%%%%%%%%%%%%%%%%%%%%%%%%%%%%%%%%%%%%%%%%%%%%

\subsection{Separation coordinates and conjugate momenta}
Assume moreover that
\[
du^1\wedge du^2\wedge dH_0\wedge dH_1\neq 0
\]
on a possibly smaller open set, so that \((u^1,u^2,H_0,H_1)\) are local coordinates. Differentiating
\[
s_1=u^1+u^2,
\qquad
s_2=-u^1u^2
\]
gives
\[
ds_1=du^1+du^2,
\qquad
ds_2=-(u^2\,du^1+u^1\,du^2),
\]
hence
\begin{equation}\label{eq:du-in-ds}
du^1=\dfrac {u^1\,ds_1+ds_2}{u^1-u^2},
\qquad
du^2=\dfrac {u^2\,ds_1+ds_2}{u^2-u^1}.
\end{equation}

Let \(Y_i\in\mathfrak X(U)\), \(i=1,2\), be the leaf-tangent vector fields defined by
\[
du^j(Y_i)=\delta_i^{\,j},
\qquad
dH_a(Y_i)=0,
\qquad
j=1,2,\ a=0,1.
\]
On each leaf we define the conjugate momenta by
\begin{equation}\label{eq:pi-def}
\pi_i:=\theta(Y_i),
\qquad i=1,2,
\end{equation}
where
\[
\theta=p_x\,dx+p_y\,dy
\]
is the Liouville \(1\)-form. Equivalently,
\begin{equation}\label{eq:pi-explicit}
\pi_i
=
p_x\,Y_i(x)+p_y\,Y_i(y)
=
p_x\,\dfrac {\partial x}{\partial u^i}\Big|_{H}
+
p_y\,\dfrac {\partial y}{\partial u^i}\Big|_{H},
\qquad i=1,2.
\end{equation}

By construction,
\[
\omega_0|_{F_c}=d\pi_1\wedge du^1+d\pi_2\wedge du^2,
\]
hence \((u^1,u^2,\pi_1,\pi_2)\) are canonical coordinates on each leaf. The momenta \(\pi_i\) are defined up to the gauge transformation
\[
\pi_i\mapsto \pi_i+\partial_{u^i}G(u^1,u^2),
\]
and this freedom may be fixed, when needed, by imposing the Darboux--Nijenhuis normalisation
\[
N^*d\pi_i=u^i\,d\pi_i,
\qquad i=1,2.
\]
In this case \((u^1,u^2,\pi_1,\pi_2)\) form a Darboux--Nijenhuis chart.

%%%%%%%%%%%%%%%%%%%%%%%%%%%%%%%%%%%%%%%%%%%%%%%%%%%%
%%%%%
%%%%%
%%%%%
%%%%%
%%%%%
%%%%%
%%%%%%%%%%%%%%%%%%%%%%%%%%%%%%%%%%%%%%%%%%%%%%%%%%%%

\section{Integrable cubic H\'enon--Heiles systems}

In this section we examine the three integrable generalized cubic cases of the H\'enon--Heiles Hamiltonian from the viewpoint of the bi-Hamiltonian approach. More precisely, our aim is to show how the underlying bi-Hamiltonian structure leads naturally to the construction of the separating coordinates and of the corresponding conjugate momenta in each case. Although the overall strategy is essentially the same for all three systems, we present each of them in full detail, since they exhibit significant geometric and structural differences.

%%%%%%%%%%%%%%%%%%%%%%%%%%%%%%%%%%%%%%%%%%%%%%%%%%%%
%%%%%
%%%%%
%%%%%
%%%%%
%%%%%
%%%%%
%%%%%%%%%%%%%%%%%%%%%%%%%%%%%%%%%%%%%%%%%%%%%%%%%%%%
\subsection{The Kaup--Kupershmidt (KK)}\label{subsec:kk}

We now turn to the Kaup--Kupershmidt case. In this subsection we carry out the bi-Hamiltonian construction for this system, including the explicit commuting integrals, the deformation vector field, the deformed Poisson tensor, the recursion operator, the control matrix, the separating coordinates, and the corresponding conjugate momenta.

We work on the standard symplectic manifold \((T^*\R^2,\omega_0)\), endowed with canonical local coordinates \((p_x,p_y,x,y)\). The Hamiltonian is
\begin{equation}\label{eq:kk-H}
H_{\mathrm{KK}}
=
\dfrac12\bigl(p_x^2+p_y^2\bigr)
+8\omega x^2
+\dfrac{\omega}{2}y^2
+a\,x y^2
+\dfrac{16a}{3}x^3
+\dfrac{\mu}{2y^2}.
\end{equation}
This is precisely the Kaup--Kupershmidt member of the integrable cubic H\'enon--Heiles family discussed above. The corresponding second independent first integral is
\begin{equation}\label{eq:K-KK}
K_{\mathrm{KK}}
=
9\left(\omega y^2+p_y^2+\dfrac{\mu}{y^2}\right)^2
+12a\,p_y\,y^2\,(3x p_y-y p_x)
-2a^2y^4(6x^2+y^2)
+12ax(\mu-\omega y^4)
-18\omega \mu .
\end{equation}
A direct, although somewhat lengthy, computation with the canonical Poisson bracket shows that
\begin{equation}\label{eq:kk-involution}
\{H_{\mathrm{KK}},K_{\mathrm{KK}}\}_0=0.
\end{equation}
This verification can also be carried out efficiently with a computer algebra system such as Maple. Hence the pair \(\bigl(H_{\mathrm{KK}},K_{\mathrm{KK}}\bigr)\) defines a Liouville integrable Hamiltonian system.

The two integrals are invariant under the symplectic involution
\begin{equation}\label{eq:kk-sigma}
\sigma:(p_x,p_y,x,y)\longmapsto (p_x,-p_y,x,-y).
\end{equation}
Taking into account the \(\sigma\)-invariance of the pair \((H_{\mathrm{KK}},K_{\mathrm{KK}})\), together with the additional symmetries guiding our ansatz, and with the aid of Maple, we are led to the deformation vector field
\begin{equation}\label{eq:kk-X}
X=
X^{p_x}\dfrac{\partial}{\partial p_x}
+
X^{p_y}\dfrac{\partial}{\partial p_y}
+
X^{x}\dfrac{\partial}{\partial x}
+
X^{y}\dfrac{\partial}{\partial y},
\end{equation}
where, in the canonical coordinates \((p_x,p_y,x,y)\),
\begin{equation}\label{eq:kk-X-components}
X=
\begin{pmatrix}
-4ax\,p_x-6ay\,p_y+\dfrac{48x(ax+\omega)}{y}\,p_y\\[0.8em]
0\\[0.8em]
-\dfrac12 ay^2-3\,\dfrac{p_xp_y}{y}-3\omega x\\[0.8em]
-6\,\dfrac{p_y^2}{y}+(8ax+3\omega)y
\end{pmatrix}.
\end{equation}

Accordingly, the deformed Poisson tensor is defined by
\begin{equation}\label{eq:kk-P1-def}
P_1=\mathcal L_XP_0,
\end{equation}
where, with respect to the ordering \((p_x,p_y,x,y)\), the canonical Poisson tensor is
\begin{equation}\label{eq:kk-P0}
P_0=
\begin{pmatrix}
0&0&-1&0\\
0&0&0&-1\\
1&0&0&0\\
0&1&0&0
\end{pmatrix}.
\end{equation}
A direct computation yields
\begin{equation}\label{eq:kk-P1}
P_1=
\begin{pmatrix}
0 &
\dfrac{6p_y\bigl(ay^2+8x(ax+\omega)\bigr)}{y^2} &
-(4ax+3\omega) &
\dfrac{2\bigl(ay^2+24x(ax+\omega)\bigr)}{y}
\\[1em]
-\dfrac{6p_y\bigl(ay^2+8x(ax+\omega)\bigr)}{y^2} &
0 &
-ay+\dfrac{3p_xp_y}{y^2} &
8ax+3\omega+\dfrac{6p_y^2}{y^2}
\\[1em]
4ax+3\omega &
ay-\dfrac{3p_xp_y}{y^2} &
0 &
-\dfrac{3p_x}{y}
\\[1em]
-\dfrac{2\bigl(ay^2+24x(ax+\omega)\bigr)}{y} &
-\left(8ax+3\omega+\dfrac{6p_y^2}{y^2}\right) &
\dfrac{3p_x}{y} &
0
\end{pmatrix}.
\end{equation}

The recursion operator associated with the deformation tensor \(P_1\) is
\begin{equation}\label{eq:kk-N}
N=P_1P_0^{-1}=-P_1P_0.
\end{equation}
A direct computation gives
\begin{equation}\label{eq:kk-N-matrix}
N=
\begin{pmatrix}
4ax+3\omega &
-\dfrac{2\bigl(ay^2+24x(ax+\omega)\bigr)}{y} &
0 &
\dfrac{6p_y\bigl(ay^2+8x(ax+\omega)\bigr)}{y^2}
\\[1em]
ay-\dfrac{3p_xp_y}{y^2} &
-\left(8ax+3\omega+\dfrac{6p_y^2}{y^2}\right) &
-\dfrac{6p_y\bigl(ay^2+8x(ax+\omega)\bigr)}{y^2} &
0
\\[1em]
0 &
\dfrac{3p_x}{y} &
4ax+3\omega &
ay-\dfrac{3p_xp_y}{y^2}
\\[1em]
-\dfrac{3p_x}{y} &
0 &
-\dfrac{2\bigl(ay^2+24x(ax+\omega)\bigr)}{y} &
-\left(8ax+3\omega+\dfrac{6p_y^2}{y^2}\right)
\end{pmatrix}.
\end{equation}
The entries of \(N\) are sufficiently involved that checking the vanishing of the Nijenhuis torsion by hand is not illuminating. Using Maple, one verifies directly from the coordinate formula for \(T_N\) that all its components vanish identically. Therefore \(N\) is a torsionless recursion operator.

It is worth observing that the deformation vector field \(X\), and hence the tensors \(P_1=\mathcal L_XP_0\) and \(N=P_1P_0^{-1}\), do not depend explicitly on the parameter \(\mu\). This is not contradictory: \(P_1\) and \(N\) are ambient geometric objects determined solely by the deformation of the canonical Poisson tensor. By contrast, the control matrix introduced below describes the action of \(N^*\) on the \(\mu\)-dependent codistribution \(\operatorname{span}\{dH_{\mathrm{KK}},dK_{\mathrm{KK}}\}\), and therefore it does depend on \(\mu\).

We now compute the control matrix \(M\) associated with the recursion operator \(N\). By definition, \(M\) is determined by the relations
\begin{equation}\label{eq:kk-control-def}
N^* dH_{\mathrm{KK}}=M_{11}\,dH_{\mathrm{KK}}+M_{12}\,dK_{\mathrm{KK}},
\qquad
N^* dK_{\mathrm{KK}}=M_{21}\,dH_{\mathrm{KK}}+M_{22}\,dK_{\mathrm{KK}}.
\end{equation}
A direct symbolic computation shows that the span of \(\{dH_{\mathrm{KK}},dK_{\mathrm{KK}}\}\) is invariant under \(N^*\) and yields
\begin{equation}\label{eq:kk-M-general}
M=
\begin{pmatrix}
-\;2ax-\dfrac{3p_y^2}{y^2}+\dfrac{3\mu}{y^4}
&
-\dfrac{1}{12y^2}
\\[1.1em]
-\dfrac{12}{y^2}\bigl(K_{\mathrm{KK}}+6\mu\,M_{22}\bigr)
&
-\;2ax-\dfrac{3p_y^2}{y^2}-\dfrac{3\mu}{y^4}
\end{pmatrix},
\end{equation}
where
\begin{equation}\label{eq:kk-M22-general}
M_{22}=-2ax-\dfrac{3p_y^2}{y^2}-\dfrac{3\mu}{y^4}.
\end{equation}

The separating coordinates are obtained as the eigenvalues of the control matrix \(M\). To compute them, it is convenient to introduce
\begin{equation}\label{eq:kk-alpha-beta}
\alpha:=-2ax-\dfrac{3p_y^2}{y^2},
\qquad
\beta:=\dfrac{3\mu}{y^4},
\end{equation}
so that
\begin{equation}\label{eq:kk-M-compact}
M=
\begin{pmatrix}
\alpha+\beta & -\dfrac{1}{12y^2}\\[1em]
-\dfrac{12}{y^2}\bigl(K_{\mathrm{KK}}+6\mu(\alpha-\beta)\bigr) & \alpha-\beta
\end{pmatrix}.
\end{equation}
Hence
\begin{equation}\label{eq:kk-trace-det}
\operatorname{tr}M=2\alpha=-4ax-\dfrac{6p_y^2}{y^2},
\end{equation}
and
\begin{equation}\label{eq:kk-det}
\det M
=
(\alpha+\beta)(\alpha-\beta)
-\dfrac{1}{y^4}\bigl(K_{\mathrm{KK}}+6\mu(\alpha-\beta)\bigr)
=
(\alpha-\beta)^2-\dfrac{K_{\mathrm{KK}}}{y^4}.
\end{equation}
Therefore the characteristic polynomial is
\begin{equation}\label{eq:kk-charpoly-M}
\chi_M(\lambda)=\det(\lambda I-M)
=
\lambda^2
+\left(4ax+\dfrac{6p_y^2}{y^2}\right)\lambda
+
\left(2ax+\dfrac{3p_y^2}{y^2}+\dfrac{3\mu}{y^4}\right)^2
-\dfrac{K_{\mathrm{KK}}}{y^4}.
\end{equation}
Its roots are
\begin{equation}\label{eq:kk-u-general}
u_{1,2}
=
-2ax-\dfrac{3p_y^2}{y^2}
\pm
\dfrac{1}{y^2}
\sqrt{
K_{\mathrm{KK}}
-12a\mu x
-\dfrac{18\mu p_y^2}{y^2}
-\dfrac{9\mu^2}{y^4}
}.
\end{equation}
We take these eigenvalues as the separating coordinates.

\subsubsection*{Separating coordinates and conjugate momenta in the general case}

For later convenience, let us introduce the shorthand
\begin{equation}\label{eq:kk-Delta}
\Delta:=
\sqrt{
K_{\mathrm{KK}}
-12a\mu x
-\dfrac{18\mu p_y^2}{y^2}
-\dfrac{9\mu^2}{y^4}
},
\end{equation}
so that
\begin{equation}\label{eq:kk-u-Delta}
u_{1,2}=-2ax-\dfrac{3p_y^2}{y^2}\pm\dfrac{\Delta}{y^2}.
\end{equation}

We now construct conjugate momenta \(v_1,v_2\), following our convention of writing canonical pairs with the momenta first, namely
\begin{equation}\label{eq:kk-pairs-general}
(v_1,u_1,v_2,u_2).
\end{equation}
A convenient choice is
\begin{equation}\label{eq:kk-v-general-ansatz}
v_i=-\dfrac{p_x}{4a}-\dfrac{p_y}{4a^2y}\bigl(u_i-4ax-3\omega\bigr),
\qquad i=1,2.
\end{equation}
Equivalently,
\begin{equation}\label{eq:kk-v-general}
v_{1,2}
=
-\dfrac{p_x}{4a}
-
\dfrac{p_y}{4a^2y}
\left(
-6ax-3\omega-\dfrac{3p_y^2}{y^2}
\pm\dfrac{\Delta}{y^2}
\right).
\end{equation}

Let us briefly explain how this expression is obtained. Since each \(u_i\) depends on \((x,y,p_y)\) but not on \(p_x\), one has
\begin{equation}\label{eq:kk-u-px}
\dfrac{\partial u_i}{\partial p_x}=0.
\end{equation}
It is therefore natural to look for \(v_i\) in the affine form
\begin{equation}\label{eq:kk-v-affine}
v_i=-\dfrac{p_x}{4a}+\phi_i(x,y,p_y),
\end{equation}

and then determine the correction term \(\phi_i\) by imposing the canonical relations
\begin{equation}\label{eq:kk-canonical-target}
\{u_i,v_j\}_0=\delta_{ij}.
\end{equation}
Substituting this ansatz into the canonical Poisson bracket associated with the coordinates \((p_x,p_y,x,y)\), namely
\begin{equation}\label{eq:kk-poisson-bracket}
\{f,g\}_0
=
\dfrac{\partial f}{\partial x}\dfrac{\partial g}{\partial p_x}
-\dfrac{\partial f}{\partial p_x}\dfrac{\partial g}{\partial x}
+
\dfrac{\partial f}{\partial y}\dfrac{\partial g}{\partial p_y}
-\dfrac{\partial f}{\partial p_y}\dfrac{\partial g}{\partial y},
\end{equation}
one finds that the choice
\begin{equation}\label{eq:kk-phi-general}
\phi_i=-\dfrac{p_y}{4a^2y}\bigl(u_i-4ax-3\omega\bigr)
\end{equation}

cancels the unwanted terms and yields the required Kronecker delta.

A direct symbolic verification then shows that the full set of canonical relations is satisfied:
\begin{equation}\label{eq:kk-canonical-general}
\{u_i,u_j\}_0=0,
\qquad
\{u_i,v_j\}_0=\delta_{ij},
\qquad
\{v_i,v_j\}_0=0.
\end{equation}
Hence \((v_1,u_1,v_2,u_2)\) is a system of canonical separated variables. Since the explicit formulas are rather cumbersome, the verification of these brackets is most conveniently carried out with a computer algebra system. In our case, the computation was performed with Maple, which confirms that the above formulas indeed define a canonical chart.

Reversing this change of coordinates we can express the Hamiltonian functions in the new coordinates:
\begin{equation}\label{h1h2}
\begin{aligned}
h_1 &= \dfrac{f(u_1,v_1)+f(u_2,v_2)}{2} + \dfrac{\mu ( u_1 - u_2)}{12 \bigl( -f(v_1, u_1) + f(v_2, u_2) \bigr)}, \\
h_2 &= \dfrac{ \bigl( -f(u_1,v_1)+f(u_2,v_2) \bigr)^2 }{4}  - \dfrac{\mu(u_1 + u_2)}{12} + \dfrac{\mu^2 ( u_1 - u_2)^2 }{144 \bigl( -f(v_1, u_1) + f(v_2, u_2) \bigr)^2}.
\end{aligned}
\end{equation}
where
\[
f(v,u) = \dfrac{96 a^{4} v^{2} +27 \omega^{3}-9 \omega^{2} u -3 \omega  u^{2}+u^{3}}{12 a^{2}}.
\]
Moreover, \(h_2\) has been divided by \(36\) to simplify the calculations. Fixing the values of the Hamiltonian functions
\[
h_1 = E,
\qquad
h_2 = K,
\]
the expression for \(h_1\) can easily be separated in the form
\[
(f(v_1,u_1) - E)^2 - \dfrac{\mu u_1}{6}
=
(f(v_2,u_2) - E)^2 - \dfrac{\mu u_2}{6}
=
C.
\]
Using \eqref{h1h2}, one realizes that this constant is \(C=K\), so that the separated equations can be written
\begin{equation}\label{sep_kk}
(f(v_1, u_1) - E)^2 - \dfrac{\mu u_1}{6} = K = (f(v_2, u_2) - E)^2 - \dfrac{\mu u_2}{6}\,.
\end{equation}
Solving \eqref{sep_kk} for the momenta, we find
\begin{equation}
\begin{aligned}
v_1 &= \dfrac{\sqrt{72 E \,a^{2}-162 \omega^{3}+54 u_{1} \omega^{2}+18 u_{1}^{2} \omega -6 u_{1}^{3}+12 \sqrt{6 a^{4} \mu  u_{1}+36 K \,a^{4}}}}{24 a^{2}},\\
v_2 &= \dfrac{\sqrt{72 E \,a^{2}-162 \omega^{3}+54 u_{2} \omega^{2}+18 u_{2}^{2} \omega -6 u_{2}^{3}+12 \sqrt{6 a^{4} \mu  u_{2}+36 K \,a^{4}}}}{24 a^{2}}.
\end{aligned}
\end{equation}
Now these expressions have to be replaced in the canonical equations for \(u_1\) and \(u_2\):
\begin{equation}
\begin{aligned}
\dot{u}_1 &= 8 a^{2} v_{1}+\dfrac{192 \left(u_{1}-u_{2}\right) \mu  \,a^{6} v_{1}}{\left(\left(96 v_{1}^{2}-96 v_{2}^{2}\right) a^{4}+\left(-9 u_{1}+9 u_{2}\right) \omega^{2}+\left(-3 u_{1}^{2}+3 u_{2}^{2}\right) \omega +u_{1}^{3}-u_{2}^{3}\right)^{2}},\\
\dot{u}_2 &= 8 a^{2} v_{2}-\dfrac{192 \left(u_{1}-u_{2}\right) \mu  \,a^{6} v_{2}}{\left(\left(96 v_{1}^{2}-96 v_{2}^{2}\right) a^{4}+\left(-9 u_{1}+9 u_{2}\right) \omega^{2}+\left(-3 u_{1}^{2}+3 u_{2}^{2}\right) \omega +u_{1}^{3}-u_{2}^{3}\right)^{2}}.
\end{aligned}
\end{equation}
Observe that the denominator simplifies to
\[
24 \left(\sqrt{a^{4} \left(\mu  u_{1}+6 K \right)}-\sqrt{a^{4} \left(\mu  u_{2}+6 K \right)}\right)^{2},
\]
and so
\begin{equation}
\begin{aligned}
\dot{u}_1 &= \dfrac{1}{3} \left(1+\dfrac{\mu  \left(u_{1}-u_{2}\right)}{\left(\sqrt{\mu  u_{1}+6 K}-\sqrt{\mu  u_{2}+6 K}\right)^{2}}\right) \sqrt{P(u_1)},\\
\dot{u}_2 &= \dfrac{1}{3} \left(1+\dfrac{\mu  \left(u_{1}-u_{2}\right)}{\left(\sqrt{\mu  u_{1}+6 K}-\sqrt{\mu  u_{2}+6 K}\right)^{2}}\right) \sqrt{P(u_2)}.
\end{aligned}
\end{equation}
with
\[
P(u) = 72 E \,a^{2}-162 \omega^{3}+54 u \omega^{2}+18 \omega  u^{2}-6 u^{3}+12 \sqrt{6 a^{4} \mu  u+36 K \,a^{4}}\, .
\]
So we finally arrive at the quadratures:
\[
\dfrac{d u_1}{\sqrt{P(u_1)}} = \dfrac{d u_2}{\sqrt{P(u_2)}}\, .
\]
Since our convention is to write canonical pairs with the momenta first, namely
\[
(v_1,u_1,v_2,u_2),
\]
the Hamilton equations for a Hamiltonian \(\mathcal H(v_1,u_1,v_2,u_2)\) take the form
\begin{equation}\label{eq:kk-hamilton-system}
\left\{
\begin{aligned}
\dot v_1 &= -\dfrac{\partial \mathcal H}{\partial u_1}\\[4pt]
\dot u_1 &= \phantom{-}\dfrac{\partial \mathcal H}{\partial v_1}
\end{aligned}
\right.
\hskip2truecm
\left\{
\begin{aligned}
\dot v_2 &= -\dfrac{\partial \mathcal H}{\partial u_2}\\[4pt]
\dot u_2 &= \phantom{-}\dfrac{\partial \mathcal H}{\partial v_2}.
\end{aligned}
\right.
\end{equation}
In the present case, after solving the separated relations for the momenta \(v_1\) and \(v_2\), the equations for the separated coordinates can be written in the form
\begin{equation}\label{eq:kk-u-system}
\left\{
\begin{aligned}
\dot u_1 &= G(u_1,u_2)\,\sqrt{P(u_1)},\\[4pt]
\dot u_2 &= G(u_1,u_2)\,\sqrt{P(u_2)},
\end{aligned}
\right.
\end{equation}
where \(G(u_1,u_2)\) is the common prefactor already introduced above. Hence the two equations are separated up to this common time reparametrization, and therefore one obtains the quadrature relation
\begin{equation}\label{eq:kk-quadrature-relation}
\dfrac{du_1}{\sqrt{P(u_1)}}=\dfrac{du_2}{\sqrt{P(u_2)}}.
\end{equation}
\begin{remark}
The common time reparametrization factor
\[
G(u_1,u_2)
=
\frac{1}{3}
\left(
1+
\frac{\mu (u_1-u_2)}
{\bigl(\sqrt{\mu u_1+6K}-\sqrt{\mu u_2+6K}\bigr)^2}
\right)
\]
becomes particularly simple for some special choices of the parameters. In particular, if \(\mu=0\), then the coupling term disappears identically and one obtains
\[
G(u_1,u_2)=\frac13.
\]
Thus the time reparametrization is trivial up to a constant rescaling:
\[
d\tau=\frac13\,dt.
\]
In this case the equations for the separated coordinates are already decoupled in the original time, modulo the inessential constant factor \(1/3\).

More generally, even when \(\mu\neq0\), the time scaling becomes asymptotically trivial whenever the correction term
\[
\frac{\mu (u_1-u_2)}
{\bigl(\sqrt{\mu u_1+6K}-\sqrt{\mu u_2+6K}\bigr)^2}
\]
reduces to a constant or admits a simple normalization under special parameter choices. The case \(\mu=0\) is the clearest instance: the non-polynomial contribution disappears, the covering becomes unnecessary, and the separated dynamics is recovered directly after a constant rescaling of time.
\end{remark}
\begin{remark}
Although the final quadratures obtained above involve the square root of a non-polynomial function of the separated variable \(u\), this irrationality disappears after passing to a natural algebraic covering.

Indeed, starting from the separated relation
\begin{equation}\label{eq:kk-separated-covering-start}
(f(v,u)-E)^2-\frac{\mu u}{6}=K,
\end{equation}
we introduce a new variable \(z\) by
\begin{equation}\label{eq:kk-covering}
z^2=\mu u+6K.
\end{equation}
Assuming \(\mu\neq 0\), this is equivalent to
\begin{equation}\label{eq:kk-u-covering}
u=\frac{z^2-6K}{\mu}.
\end{equation}
Substituting \eqref{eq:kk-covering} into \eqref{eq:kk-separated-covering-start}, we obtain
\begin{equation}\label{eq:kk-separated-z}
(f(v,u)-E)^2=\frac{z^2}{6}.
\end{equation}
Hence, on the covering, the separated relation can be written as
\begin{equation}\label{eq:kk-f-minus-E}
f(v,u)-E=\pm \frac{z}{\sqrt6}.
\end{equation}

Recalling that
\begin{equation}\label{eq:kk-fvu-covering}
f(v,u)=\frac{96a^4v^2+27\omega^3-9\omega^2u-3\omega u^2+u^3}{12a^2},
\end{equation}
equation \eqref{eq:kk-f-minus-E} yields
\begin{equation}\label{eq:kk-v2-before-subst}
96a^4v^2
=
12a^2E-27\omega^3+9\omega^2u+3\omega u^2-u^3
\pm 2\sqrt6\,a^2 z.
\end{equation}
Using \eqref{eq:kk-u-covering}, we then obtain
\begin{equation}\label{eq:kk-v2-after-subst}
96a^4v^2
=
12a^2E-27\omega^3
+\frac{9\omega^2}{\mu}(z^2-6K)
+\frac{3\omega}{\mu^2}(z^2-6K)^2
-\frac{1}{\mu^3}(z^2-6K)^3
\pm 2\sqrt6\,a^2 z.
\end{equation}
Equivalently, after multiplying by \(\mu^3\), one gets
\begin{equation}\label{eq:kk-hyperelliptic-raw}
96a^4\mu^3 v^2
=
12a^2E\,\mu^3-27\omega^3\mu^3
+9\omega^2\mu^2(z^2-6K)
+3\omega\mu (z^2-6K)^2
-(z^2-6K)^3
\pm 2\sqrt6\,a^2\mu^3 z.
\end{equation}

Therefore, on the covering defined by \eqref{eq:kk-covering}, the separated curve becomes algebraic and polynomial in the variable \(z\). More precisely, it takes the form
\begin{equation}\label{eq:kk-hyperelliptic-form}
v^2=\Pi_6(z),
\end{equation}
where \(\Pi_6(z)\) is a polynomial of degree \(6\) in \(z\). In particular, the nested radical present in the original variable \(u\) disappears completely after passing to the covering. Thus the apparently irrational separated relation is in fact the projection of a standard hyperelliptic curve on a two-sheeted covering.

This shows that the non-polynomial expression under the square root in the \(u\)-variable does not signal any failure of separability. Rather, it indicates that the separated curve is more naturally described on the covering \eqref{eq:kk-covering}, where it acquires the familiar hyperelliptic form. From this viewpoint, the quadratures written in the variable \(u\) are simply the image, under the projection \(z^2=\mu u+6K\), of polynomial quadratures on the covering curve.

If one passes from the separated coordinate \(u\) to the covering variable \(z\) defined by
\[
z^2=\mu u+6K,
\]
then, in order to preserve the canonical form, one must simultaneously transform the conjugate momentum \(v\). Writing the new canonical pair as \((w,z)\), with momenta first, the condition
\[
v\,du=w\,dz
\]
gives
\[
w=v\frac{du}{dz}=\frac{2z}{\mu}\,v.
\]
Hence the covering becomes a canonical change of separated variables only after the simultaneous replacement
\[
(v,u)\longmapsto \left(\frac{2z}{\mu}v,\; z\right).
\]
\end{remark}

\begin{example}
A particularly simple but still nontrivial example is obtained by choosing
\[
a=1,\qquad \mu=1,\qquad \omega=0,\qquad K=0.
\]
Then the separated relation
\[
(f(v,u)-E)^2-\frac{\mu u}{6}=K
\]
reduces to
\[
(f(v,u)-E)^2=\frac{u}{6}.
\]
Since in this case
\[
f(v,u)=\frac{96v^2+u^3}{12},
\]
we obtain
\[
\left(\frac{96v^2+u^3}{12}-E\right)^2=\frac{u}{6}.
\]

Introducing the covering variable
\[
z^2=u,
\]
the separated relation becomes
\[
\frac{96v^2+z^6}{12}-E=\pm \frac{z}{\sqrt6},
\]
that is,
\[
96v^2=12E-z^6\pm 2\sqrt6\,z.
\]
Therefore, on the covering, the separated curve takes the polynomial hyperelliptic form
\[
v^2=\frac{1}{96}\bigl(12E-z^6\pm 2\sqrt6\,z\bigr).
\]

This example shows explicitly that the irrationality appearing in the original separated variable \(u\) is removed by passing to the natural covering \(z^2=u\), and that the corresponding separated curve becomes polynomial in the new variable \(z\).
\end{example}

% \begin{example}\label{ex:kk-omega-zero-classical}
% As a particularly simple nontrivial example, let
% \[
% \omega=0.
% \]
% Then \eqref{eq:kk-H} and \eqref{eq:K-KK}  become
% \[
% H_{\mathrm{KK}}
% =
% \dfrac12(p_x^2+p_y^2)+a\,xy^2+\dfrac{16a}{3}x^3,
% \]
% \[
% K_{\mathrm{KK}}
% =
% 9p_y^4+12a\,y^2p_y(3x\,p_y-y\,p_x)-2a^2y^4(6x^2+y^2).
% \]
% The separating coordinates reduce to
% \[
% u_1=\dfrac{\sqrt{K_{\mathrm{KK}}}}{y^2}-\left(3\dfrac{p_y^2}{y^2}+2ax\right),
% \qquad
% u_2=-\dfrac{\sqrt{K_{\mathrm{KK}}}}{y^2}-\left(3\dfrac{p_y^2}{y^2}+2ax\right),
% \]
% and the momenta \eqref{eq:kk-v-classical-explicit} become
% \[
% v_1=\dfrac{1}{16a^2}\left(-4ap_x-2\dfrac{p_y}{y}u_2\right),
% \qquad
% v_2=\dfrac{1}{16a^2}\left(-4ap_x-2\dfrac{p_y}{y}u_1\right).
% \]
% In these variables the Hamiltonian separates as
% \[
% H_{\mathrm{KK}}=H_1(v_1,u_1)+H_2(v_2,u_2),
% \qquad
% H_i(v_i,u_i)=4a^2v_i^2+\dfrac{u_i^3}{24a^2}.
% \]
% The Hamilton equations reduce to
% \[
% \dot v_i=-\dfrac{u_i^2}{8a^2},
% \qquad
% \dot u_i=8a^2v_i,
% \qquad i=1,2,
% \]
% or equivalently,
% \[
% \ddot u_i+u_i^2=0,
% \qquad i=1,2.
% \]
% Thus the dynamics splits into two independent one-degree-of-freedom systems.
% \end{example}

%%%%%%%%%%%%%%%%%%%%%%%%%%%%%%%%%%%%%%%%%%%%%%%%%%%%
%%%%%
%%%%%
%%%%%
%%%%%
%%%%%
%%%%%
%%%%%%%%%%%%%%%%%%%%%%%%%%%%%%%%%%%%%%%%%%%%%%%%%%%%

\subsection{The KdV$_5$ case}

This subsection is devoted to the KdV$_5$ case. Our aim is to reproduce, as far as possible, the same bi-Hamiltonian construction developed in detail for the Kaup--Kupershmidt system, namely the determination of the relevant integrals, the deformed Poisson tensor, the recursion operator, the control matrix, the separating coordinates, and the corresponding conjugate momenta. Since the general strategy is by now standard, we emphasize the points specific to the present case and refer to the previous subsection for those computations that follow the same pattern verbatim.

We work on the standard symplectic space \((T^*\mathbb{R}^2,\omega_0)\), endowed with canonical coordinates \((p_x,p_y,x,y)\), and we consider the generalized cubic H\'enon--Heiles Hamiltonian
\[
H_{\mathrm{KdV}_5}
=
\dfrac12\bigl(p_x^2+p_y^2\bigr)
+\dfrac12\bigl(\omega_1x^2+\omega_2y^2\bigr)
+a\,x y^2+2a\,x^3+\dfrac{\mu}{2y^2}.
\]
This is precisely the KdV$_5$ integrable case in our notation. A convenient representative of the second independent first integral is
\[
K_{\mathrm{KdV}_5}
=
4ay\,p_xp_y
+\bigl(4\omega_2-\omega_1-4ax\bigr)
\left(p_y^2+\dfrac{\mu}{y^2}\right)
+a^2y^4
+4a^2x^2y^2
+4a\omega_2xy^2
+\omega_2(4\omega_2-\omega_1)y^2.
\]
A direct computation with the canonical Poisson bracket shows that
\[
\{H_{\mathrm{KdV}_5},K_{\mathrm{KdV}_5}\}_0=0.
\]
Hence the pair \(\bigl(H_{\mathrm{KdV}_5},K_{\mathrm{KdV}_5}\bigr)\) defines a completely integrable Hamiltonian system in the Liouville sense.

As in the Kaup--Kupershmidt case, the starting point of the bi-Hamiltonian construction is the search for a vector field \(X\) such that
\[
P_1=\mathcal L_XP_0
\]
defines a second Poisson tensor compatible with the canonical Poisson tensor \(P_0\), and such that the associated recursion operator
\[
N=P_1P_0^{-1}
\]
is torsionless. Once such a deformation has been found, one may extract the separating coordinates from the simple eigenvalues of \(N\), and then recover the conjugate separated momenta by imposing the Darboux--Nijenhuis conditions exactly as in the previous subsection.

In the present case we follow the same sequence of steps:
\[
\bigl(H_{\mathrm{KdV}_5},K_{\mathrm{KdV}_5}\bigr)
\quad\Longrightarrow\quad
X
\quad\Longrightarrow\quad
P_1=\mathcal L_XP_0
\quad\Longrightarrow\quad
N=P_1P_0^{-1}
\quad\Longrightarrow\quad
(u_1,u_2)
\quad\Longrightarrow\quad
(v_1,v_2).
\]
We now describe the main ingredients of this construction.

\medskip
\noindent
\textbf{Control matrix and the explicit separated variables.}
As in the Kaup--Kupershmidt case, we look for a deformation of the canonical Poisson tensor of the form
\[
P_1=\mathcal L_XP_0
\]
such that the associated recursion operator
\[
N=P_1P_0^{-1}
\]
is torsionless and its simple eigenvalues provide the separating coordinates. In the present case it is convenient to set
\[
\delta=\dfrac{4\omega_2-\omega_1}{2a},
\]
and to consider the vector field
\[
X=
\begin{pmatrix}
-(2x-\delta)p_x-y\,p_y\\[6pt]
-y\,p_x\\[6pt]
0\\[6pt]
0
\end{pmatrix}.
\]
Since \(P_0\) is constant, one has
\[
P_1=\mathcal L_XP_0=-(J_XP_0+P_0J_X^\top),
\]
where \(J_X\) denotes the Jacobian matrix of \(X\). A direct computation yields
\[
P_1=
\begin{pmatrix}
0 & p_y & \delta-2x & -y\\[4pt]
-p_y & 0 & -y & 0\\[4pt]
2x-\delta & y & 0 & 0\\[4pt]
y & 0 & 0 & 0
\end{pmatrix},
\]
and therefore
\[
N=P_1P_0^{-1}=
\begin{pmatrix}
2x-\delta & y & 0 & p_y\\[4pt]
y & 0 & -p_y & 0\\[4pt]
0 & 0 & 2x-\delta & y\\[4pt]
0 & 0 & y & 0
\end{pmatrix}.
\]
Its characteristic polynomial is
\[
\chi_N(\lambda)=\det(\lambda I-N)
=
\bigl(\lambda^2-(2x-\delta)\lambda-y^2\bigr)^2,
\]
so that the Darboux--Nijenhuis coordinates are
\[
u_{1,2}
=
\dfrac{2x-\delta\pm\sqrt{(2x-\delta)^2+4y^2}}{2},
\qquad
u_1+u_2=2x-\delta,
\qquad
u_1u_2=-y^2.
\]
We now consider the codistribution
\[
\operatorname{span}\{dH_{\mathrm{KdV}_5},\,dK_{\mathrm{KdV}_5}\}\subset T^*M.
\]
The adjoint recursion operator
\[
N^*=P_0^{-1}P_1
\]
acts on it through a \(2\times2\) control matrix \(M\), defined by
\[
N^*
\begin{pmatrix}
dH_{\mathrm{KdV}_5}\\[2pt]
dK_{\mathrm{KdV}_5}
\end{pmatrix}
=
M
\begin{pmatrix}
dH_{\mathrm{KdV}_5}\\[2pt]
dK_{\mathrm{KdV}_5}
\end{pmatrix}.
\]
A direct computation gives
\[
\begin{cases}
N^*dH_{\mathrm{KdV}_5}
=
(2x-\delta)\,dH_{\mathrm{KdV}_5}
+\dfrac{1}{4a}\,dK_{\mathrm{KdV}_5},
\\[10pt]
N^*dK_{\mathrm{KdV}_5}
=
4a\,y^2\,dH_{\mathrm{KdV}_5}.
\end{cases}
\]
Hence
\[
M=
\begin{pmatrix}
2x-\delta & \dfrac{1}{4a}\\[8pt]
4a\,y^2 & 0
\end{pmatrix}
=
\begin{pmatrix}
u_1+u_2 & \dfrac{1}{4a}\\[8pt]
-4a\,u_1u_2 & 0
\end{pmatrix}.
\]

For the explicit computations it is convenient to multiply \(M\) by the constant factor \(4a\), and to introduce the equivalent matrix
\[
\widetilde M=4a\,M=
\begin{pmatrix}
8ax+2\omega_1-8\omega_2 & 1\\[6pt]
16a^2y^2 & 0
\end{pmatrix}.
\]
The associated compatibility relations are satisfied, and the coefficients
\[
s_1=8ax+2\omega_1-8\omega_2,
\qquad
s_2=16a^2y^2
\]
are functionally independent and in involution. Therefore the eigenvalues of \(\widetilde M\) provide another admissible pair of separating coordinates, namely
\[
X=4a\,u_2,\qquad Y=4a\,u_1,
\]
up to permutation. Explicitly,
\[
\begin{aligned}
X&=4ax+\omega_1-4\omega_2
+\sqrt{16a^2x^2+16a^2y^2+8ax\omega_1-32ax\omega_2+\omega_1^2-8\omega_1\omega_2+16\omega_2^2},
\\[4pt]
Y&=4ax+\omega_1-4\omega_2
-\sqrt{16a^2x^2+16a^2y^2+8ax\omega_1-32ax\omega_2+\omega_1^2-8\omega_1\omega_2+16\omega_2^2}.
\end{aligned}
\]
In these variables one has
\[
X+Y=8ax+2\omega_1-8\omega_2,
\qquad
XY=-16a^2y^2.
\]

Proceeding exactly as in the direct computation, one then determines conjugate momenta \(P_1,P_2\) such that
\[
\{X,P_1\}_0=\{Y,P_2\}_0=1,
\qquad
\{X,Y\}_0=\{P_1,P_2\}_0=\{X,P_2\}_0=\{Y,P_1\}_0=0.
\]
The inverse canonical transformation is
\[
x=\dfrac{X+Y-2\omega_1+8\omega_2}{8a},
\qquad
y=\dfrac{\sqrt{-XY}}{4a},
\]
\[
p_x=8a\,\dfrac{XP_1-YP_2}{X-Y},
\qquad
p_y=8a\sqrt{-XY}\,\dfrac{P_1-P_2}{X-Y}.
\]
As observed in the explicit computation, it is convenient to perform the harmless rescaling
\[
X\mapsto 4X,\qquad Y\mapsto 4Y,\qquad P_i\mapsto \dfrac{P_i}{4},
\]
after which the canonical transformation takes the simpler form
\[
x=\dfrac{2X+2Y-\omega_1+4\omega_2}{4a},
\qquad
y=\dfrac{\sqrt{-XY}}{a},
\]
\[
p_x=2a\,\dfrac{XP_1-YP_2}{X-Y},
\qquad
p_y=2a\sqrt{-XY}\,\dfrac{P_1-P_2}{X-Y}.
\]

In these rescaled variables the two integrals acquire separated expressions. More precisely, setting \(H_{\mathrm{KdV}_5}=E\) and \(K_{\mathrm{KdV}_5}=K\), one introduces the function
\[
\Phi(\xi,\pi)
=
\dfrac{
4\xi^5
+(-4\omega_1+24\omega_2)\xi^4
+(\omega_1-4\omega_2)(\omega_1-12\omega_2)\xi^3
+\bigl(32a^4\pi^2+2\omega_1^2\omega_2-16\omega_1\omega_2^2+32\omega_2^3\bigr)\xi^2
+8\mu a^4
}{16\xi a^2}.
\]
 Then one finds
\[
H_{\mathrm{KdV}_5}
=
\dfrac{\Phi(X,P_1)-\Phi(Y,P_2)}{X-Y},
\]
and, moreover,
\[
\Phi(X,P_1)-E\,X-\dfrac{K}{4}
=
\Phi(Y,P_2)-E\,Y-\dfrac{K}{4}
=
0.
\]
Thus the separated relations are
\[
\Phi(X,P_1)-E\,X-\dfrac{K}{4}=0,
\qquad
\Phi(Y,P_2)-E\,Y-\dfrac{K}{4}=0.
\]

Finally, using the canonical equations
\[
\dot X=\dfrac{\partial H_{\mathrm{KdV}_5}}{\partial P_1}
=\dfrac{4a^2P_1X}{X-Y},
\qquad
\dot Y=\dfrac{\partial H_{\mathrm{KdV}_5}}{\partial P_2}
=\dfrac{4a^2P_2Y}{Y-X},
\]
one eliminates the momenta and obtains the hyperelliptic form
\[
2(X-Y)\dot X=\sqrt{\mathcal P(X)},
\qquad
2(Y-X)\dot Y=\sqrt{\mathcal P(Y)},
\]
where
\[
\mathcal P(\xi)
=
-8\xi^5
+(8\omega_1-48\omega_2)\xi^4
-2(\omega_1-4\omega_2)(\omega_1-12\omega_2)\xi^3
+\bigl(32Ea^2-4\omega_1^2\omega_2+32\omega_1\omega_2^2-64\omega_2^3\bigr)\xi^2
+8Ka^2\xi
-16\mu a^4.
\]
Therefore
\[
\dfrac{dX}{\sqrt{\mathcal P(X)}}=
-\dfrac{dY}{\sqrt{\mathcal P(Y)}}=dw,
\]
which gives the KdV$_5$ system in separated form.
\begin{remark}
     Unlike the Kaup--Kupershmidt case, in the KdV$_5$ case the Hamiltonian does not split into a sum of two one-degree-of-freedom Hamiltonians in the separated variables. Rather, the pair \(\bigl(H_{\mathrm{KdV}_5},K_{\mathrm{KdV}_5}\bigr)\) takes a generalized Stäckel form:
\[
H_{\mathrm{KdV}_5}
=
\dfrac{\Phi(X,P_1)-\Phi(Y,P_2)}{X-Y},
\qquad
K_{\mathrm{KdV}_5}
=
\dfrac{4X\,\Phi(Y,P_2)-4Y\,\Phi(X,P_1)}{X-Y}.
\]
Therefore the separation occurs not at the level of an additive decomposition of the Hamiltonian, but at the level of the common level sets
\[
H_{\mathrm{KdV}_5}=E,\qquad K_{\mathrm{KdV}_5}=K,
\]
which reduce to the two independent separated relations
\[
\Phi(X,P_1)-EX-\dfrac K4=0,
\qquad
\Phi(Y,P_2)-EY-\dfrac K4=0.
\]
In this sense the KdV$_5$ system is separated in the Stäckel--Hamilton--Jacobi sense, rather than by a direct splitting of the Hamiltonian vector field into two uncoupled one-degree-of-freedom subsystems.
\end{remark}
\begin{example}
A first natural specialization is obtained by setting
\[
\mu=0.
\]
In this case the Hamiltonian becomes
\[
H_{\mathrm{KdV}_5}
=
\dfrac12\bigl(p_x^2+p_y^2\bigr)
+\dfrac12\bigl(\omega_1x^2+\omega_2y^2\bigr)
+a\,xy^2+2a\,x^3,
\]
while a convenient representative of the second first integral is
\[
\begin{aligned}
K_{\mathrm{KdV}_5}
={}&
4ay\,p_xp_y
+\bigl(4\omega_2-\omega_1-4ax\bigr)p_y^2
\\
&\quad
+a^2y^4
+4a^2x^2y^2
+4a\omega_2xy^2
+\omega_2(4\omega_2-\omega_1)y^2.
\end{aligned}
\]
The separated relations keep the same form,
\[
\Phi(X,P_1)-EX-\dfrac K4=0,
\qquad
\Phi(Y,P_2)-EY-\dfrac K4=0,
\]
but now the separation polynomial simplifies to
\[
\mathcal P(\xi)
=
-8\xi^5
+(8\omega_1-48\omega_2)\xi^4
-2(\omega_1-4\omega_2)(\omega_1-12\omega_2)\xi^3
+\bigl(32Ea^2-4\omega_1^2\omega_2+32\omega_1\omega_2^2-64\omega_2^3\bigr)\xi^2
+8Ka^2\xi.
\]
Hence
\[
\mathcal P(\xi)=\xi\,\mathcal Q(\xi),
\]
with \(\mathcal Q\) a quartic polynomial. Therefore the separated equations become
\[
2(X-Y)\dot X=\pm\sqrt{X\,\mathcal Q(X)},
\qquad
2(Y-X)\dot Y=\pm\sqrt{Y\,\mathcal Q(Y)}.
\]
This example shows that removing the inverse-square term simplifies the hyperelliptic curve, although the Hamiltonian remains of generalized Stäckel type rather than additively separated.
\end{example}

\begin{example}
An even simpler algebraic model is obtained by imposing
\[
\mu=0,
\qquad
\omega_1=\omega_2=0.
\]
Then
\[
H_{\mathrm{KdV}_5}
=
\dfrac12\bigl(p_x^2+p_y^2\bigr)
+a\,xy^2+2a\,x^3,
\]
and
\[
K_{\mathrm{KdV}_5}
=
4ay\,p_xp_y-4ax\,p_y^2+a^2y^4+4a^2x^2y^2.
\]
In separated coordinates \((P_1,P_2,X,Y)\), the auxiliary function reduces to
\[
\Phi(\xi,\pi)
=
\dfrac{4\xi^5+32a^4\pi^2\xi^2}{16\xi a^2}
=
\dfrac{\xi^4}{4a^2}+2a^2\pi^2\xi,
\]
so that
\[
H_{\mathrm{KdV}_5}
=
\dfrac{\Phi(X,P_1)-\Phi(Y,P_2)}{X-Y},
\qquad
K_{\mathrm{KdV}_5}
=
\dfrac{4X\,\Phi(Y,P_2)-4Y\,\Phi(X,P_1)}{X-Y}.
\]
The separated relations are therefore
\[
\dfrac{X^4}{4a^2}+2a^2P_1^2X-EX-\dfrac K4=0,
\qquad
\dfrac{Y^4}{4a^2}+2a^2P_2^2Y-EY-\dfrac K4=0.
\]
Equivalently,
\[
P_1^2=\dfrac{-X^4+4Ea^2+\dfrac{Ka^2}{X}}{8a^4},
\qquad
P_2^2=\dfrac{-Y^4+4Ea^2+\dfrac{Ka^2}{Y}}{8a^4},
\]
or, after clearing denominators,
\[
\mathcal P(\xi)=8\xi\bigl(-\xi^4+4Ea^2\xi+Ka^2\bigr).
\]
Thus the two reduced equations of motion become
\[
2(X-Y)\dot X=\pm\sqrt{8X\bigl(-X^4+4Ea^2X+Ka^2\bigr)},
\]
\[
2(Y-X)\dot Y=\pm\sqrt{8Y\bigl(-Y^4+4Ea^2Y+Ka^2\bigr)}.
\]
This is perhaps the simplest nontrivial representative of the KdV$_5$ family: the system is still not additively separated at the Hamiltonian level, but the separated curve and the quadratures take a particularly transparent form.
\end{example}
\begin{example}
In the specialization \(\mu=0\), it is convenient to normalize \(a=1\). If one further chooses
\[
E=0,
\qquad
K=0,
\]
then the separation polynomial becomes
\[
\mathcal P(\xi)
=
-8\xi^5
+(8\omega_1-48\omega_2)\xi^4
-2(\omega_1-4\omega_2)(\omega_1-12\omega_2)\xi^3
-4\omega_2(\omega_1-4\omega_2)^2\xi^2.
\]
Hence
\[
\mathcal P(\xi)=\xi^2\,\widetilde{\mathcal P}(\xi),
\]
so the polynomial has a double zero at \(\xi=0\). An even simpler subcase is obtained by imposing
\[
\omega_1=4\omega_2,
\]
for which
\[
\mathcal P(\xi)=-8\xi^5-16\omega_2^2\xi^2
=-8\xi^2(\xi^3+2\omega_2^2).
\]
The reduced equations then read
\[
2(X-Y)\dot X=\pm\sqrt{-8X^2(X^3+2\omega_2^2)},
\qquad
2(Y-X)\dot Y=\pm\sqrt{-8Y^2(Y^3+2\omega_2^2)}.
\]
This provides a particularly transparent example in which the inverse-square term is absent and the separated curve factors nontrivially.
\end{example}

\begin{example}
In the simpler algebraic model
\[
\mu=0,
\qquad
\omega_1=\omega_2=0,
\]
we may also set
\[
a=1.
\]
Then
\[
H_{\mathrm{KdV}_5}
=
\dfrac12\bigl(p_x^2+p_y^2\bigr)+xy^2+2x^3,
\]
and the separation polynomial is
\[
\mathcal P(\xi)=8\xi(-\xi^4+4E\xi+K).
\]
A particularly simple choice is
\[
E=0,
\qquad
K=1,
\]
for which
\[
\mathcal P(\xi)=8\xi(1-\xi^4).
\]
The separated equations become
\[
2(X-Y)\dot X=\pm\sqrt{8X(1-X^4)},
\qquad
2(Y-X)\dot Y=\pm\sqrt{8Y(1-Y^4)}.
\]
Equivalently,
\[
\dfrac{dX}{\sqrt{8X(1-X^4)}}
=
-\dfrac{dY}{\sqrt{8Y(1-Y^4)}}.
\]
This is perhaps the simplest explicit nontrivial example in the KdV$_5$ family.

Another natural choice is
\[
E=1,
\qquad
K=0,
\]
which yields
\[
\mathcal P(\xi)=8\xi^2(4-\xi^3),
\]
and therefore
\[
2(X-Y)\dot X=\pm\sqrt{8X^2(4-X^3)},
\qquad
2(Y-X)\dot Y=\pm\sqrt{8Y^2(4-Y^3)}.
\]
In this case the polynomial has a double root at the origin, which makes the factorization especially clear.
\end{example}

\subsection{The Sawada--Kotera case}

We consider the generalized Sawada--Kotera Hamiltonian
\[
H_{\mathrm{SK}}
=
\dfrac12\bigl(p_x^2+p_y^2\bigr)
+\dfrac{\omega}{2}(x^2+y^2)
+a\,x y^2+\dfrac{a}{3}x^3+\dfrac{\mu}{2y^2}.
\]
Let us introduce the cubic quantity
\[
K_{\mathrm{SK},0}
=
3p_xp_y+a\,y^3+3a\,x^2y+3\omega xy.
\]
For \(\mu=0\), this is the usual polynomial Sawada--Kotera integral. In the generalized case \(\mu\neq0\), however, it is no longer conserved, since
\[
\{H_{\mathrm{SK}},K_{\mathrm{SK},0}\}_0=-\dfrac{3\mu p_x}{y^3}.
\]
The additional first integral is instead the quartic quantity
\[
K_{\mathrm{SK}}
=
K_{\mathrm{SK},0}^{\,2}
+\mu\left(\dfrac{9p_x^2}{y^2}+12ax\right),
\]
that is,
\[
K_{\mathrm{SK}}
=
\left(3p_xp_y+a\,y^3+3a\,x^2y+3\omega xy\right)^2
+\mu\left(\dfrac{9p_x^2}{y^2}+12ax\right).
\]
Thus the generalized Sawada--Kotera system admits the two first integrals 
\begin{multline}
H_{\mathrm{SK}}
=
\dfrac12\bigl(p_x^2+p_y^2\bigr)
+\dfrac{\omega}{2}(x^2+y^2)
+a\,x y^2+\dfrac{a}{3}x^3+\dfrac{\mu}{2y^2}\\
K_{\mathrm{SK}}
=
\left(3p_xp_y+a\,y^3+3a\,x^2y+3\omega xy\right)^2
+\mu\left(\dfrac{9p_x^2}{y^2}+12ax\right),
\end{multline}
and
\[
\{H_{\mathrm{SK}},K_{\mathrm{SK}}\}_0=0.
\]

\subsubsection{The Sawada--Kotera case with \texorpdfstring{\(\mu=0\)}{mu=0}}

We now treat in detail the polynomial Sawada--Kotera case \(\mu=0\). The Hamiltonian is
\[
H_{\mathrm{SK}}
=
\dfrac12\bigl(p_x^2+p_y^2\bigr)
+\dfrac{\omega}{2}(x^2+y^2)
+a\,x y^2+\dfrac{a}{3}x^3.
\]
A second independent first integral is
\[
F_{\mathrm{SK}}
=
p_xp_y+\omega xy+a\,x^2y+\dfrac{a}{3}y^3.
\]
Equivalently, one may use the cubic quantity
\[
K_{\mathrm{SK},0}=3F_{\mathrm{SK}}
=
3p_xp_y+3\omega xy+3a\,x^2y+a\,y^3.
\]
Hence \((H_{\mathrm{SK}},F_{\mathrm{SK}})\) is a Liouville pair.

\medskip
\noindent
\textbf{The deformation vector field.}
In the coordinate ordering \((p_x,p_y,x,y)\), let
\[
X=
\begin{pmatrix}
-\dfrac{x\,p_x+y\,p_y}{\sqrt2}\\[6pt]
-\dfrac{y\,p_x+x\,p_y}{\sqrt2}\\[6pt]
0\\[2pt]
0
\end{pmatrix}.
\]
Let $P_0$ be the canonical Poisson tensor. Since \(P_0\) is constant, \(P_1=\mathcal L_XP_0\), and one gets
\[
P_1
=
\begin{pmatrix}
0&0&\dfrac{x}{\sqrt2}&\dfrac{y}{\sqrt2}\\[8pt]
0&0&\dfrac{y}{\sqrt2}&\dfrac{x}{\sqrt2}\\[8pt]
-\dfrac{x}{\sqrt2}&-\dfrac{y}{\sqrt2}&0&0\\[8pt]
-\dfrac{y}{\sqrt2}&-\dfrac{x}{\sqrt2}&0&0
\end{pmatrix}.
\]

\medskip
\noindent
\textbf{The recursion operator.}
Since \(P_0^{-1}=-P_0\), the recursion operator is
\[
N=P_1P_0^{-1}
=
\begin{pmatrix}
\dfrac{x}{\sqrt2}&\dfrac{y}{\sqrt2}&0&0\\[8pt]
\dfrac{y}{\sqrt2}&\dfrac{x}{\sqrt2}&0&0\\[8pt]
0&0&\dfrac{x}{\sqrt2}&\dfrac{y}{\sqrt2}\\[8pt]
0&0&\dfrac{y}{\sqrt2}&\dfrac{x}{\sqrt2}
\end{pmatrix}.
\]
Its characteristic polynomial is
\[
\chi_N(\lambda)
=
\left(\lambda-\dfrac{x+y}{\sqrt2}\right)^2
\left(\lambda-\dfrac{x-y}{\sqrt2}\right)^2,
\]
hence the simple eigenvalues are
\[
u_1=\dfrac{x+y}{\sqrt2},
\qquad
u_2=\dfrac{x-y}{\sqrt2}.
\]

\medskip
\noindent
\textbf{The control matrix.}
For the codistribution
\[
\mathcal C=\operatorname{span}\{dH_{\mathrm{SK}},dF_{\mathrm{SK}}\},
\]
one finds
\[
N^*dH_{\mathrm{SK}}
=
\dfrac{x}{\sqrt2}\,dH_{\mathrm{SK}}
+
\dfrac{y}{\sqrt2}\,dF_{\mathrm{SK}},
\qquad
N^*dF_{\mathrm{SK}}
=
\dfrac{y}{\sqrt2}\,dH_{\mathrm{SK}}
+
\dfrac{x}{\sqrt2}\,dF_{\mathrm{SK}}.
\]
Therefore the control matrix is
\[
M=
\begin{pmatrix}
\dfrac{x}{\sqrt2}&\dfrac{y}{\sqrt2}\\[8pt]
\dfrac{y}{\sqrt2}&\dfrac{x}{\sqrt2}
\end{pmatrix},
\]
with characteristic polynomial
\[
\chi_M(\lambda)
=
\left(\lambda-\dfrac{x+y}{\sqrt2}\right)
\left(\lambda-\dfrac{x-y}{\sqrt2}\right).
\]
Thus the eigenvalues of \(M\) are precisely
\[
u_1=\dfrac{x+y}{\sqrt2},
\qquad
u_2=\dfrac{x-y}{\sqrt2},
\]
which are the Darboux--Nijenhuis separating coordinates.

\medskip
\noindent
\textbf{Separating coordinates and conjugate momenta.}
The inverse transformation is
\[
x=\dfrac{u_1+u_2}{\sqrt2},
\qquad
y=\dfrac{u_1-u_2}{\sqrt2},
\]
and the conjugate momenta are
\[
v_1=\dfrac{p_x+p_y}{\sqrt2},
\qquad
v_2=\dfrac{p_x-p_y}{\sqrt2},
\]
that is,
\[
p_x=\dfrac{v_1+v_2}{\sqrt2},
\qquad
p_y=\dfrac{v_1-v_2}{\sqrt2}.
\]
Moreover,
\[
p_x\,dx+p_y\,dy=v_1\,du_1+v_2\,du_2,
\]
so the transformation is canonical.

\medskip
\noindent
\textbf{The Hamiltonian and the second integral in separated variables.}
Using
\[
p_x^2+p_y^2=v_1^2+v_2^2,
\qquad
x^2+y^2=u_1^2+u_2^2,
\qquad
a\,x y^2+\dfrac{a}{3}x^3=\dfrac{\sqrt2\,a}{3}(u_1^3+u_2^3),
\]
we obtain
\[
H_{\mathrm{SK}}
=
\dfrac12(v_1^2+v_2^2)
+\dfrac{\omega}{2}(u_1^2+u_2^2)
+\dfrac{\sqrt2\,a}{3}(u_1^3+u_2^3).
\]
Likewise,
\[
p_xp_y=\dfrac12(v_1^2-v_2^2),
\qquad
xy=\dfrac12(u_1^2-u_2^2),
\qquad
a\,x^2y+\dfrac{a}{3}y^3=\dfrac{\sqrt2\,a}{3}(u_1^3-u_2^3),
\]
hence
\[
F_{\mathrm{SK}}
=
\dfrac12(v_1^2-v_2^2)
+\dfrac{\omega}{2}(u_1^2-u_2^2)
+\dfrac{\sqrt2\,a}{3}(u_1^3-u_2^3),
\]
and equivalently
\[
K_{\mathrm{SK},0}=3F_{\mathrm{SK}}
=
\dfrac32(v_1^2-v_2^2)
+\dfrac{3\omega}{2}(u_1^2-u_2^2)
+\sqrt2\,a\,(u_1^3-u_2^3).
\]

\medskip
\noindent
\textbf{Additive splitting of the Hamiltonian.}
In the polynomial Sawada--Kotera case the Hamiltonian splits additively exactly as in the Kaup--Kupershmidt case. Setting
\[
H_1(v_1,u_1)
=
\dfrac12v_1^2+\dfrac{\omega}{2}u_1^2+\dfrac{\sqrt2\,a}{3}u_1^3,
\qquad
H_2(v_2,u_2)
=
\dfrac12v_2^2+\dfrac{\omega}{2}u_2^2+\dfrac{\sqrt2\,a}{3}u_2^3,
\]
we get
\[
H_{\mathrm{SK}}=H_1+H_2,
\qquad
F_{\mathrm{SK}}=H_1-H_2,
\qquad
K_{\mathrm{SK},0}=3(H_1-H_2).
\]

\medskip
\noindent
\textbf{The decoupled Hamiltonian systems.}
The Hamilton equations split into the two independent systems
\[
\dot v_1=-\dfrac{\partial H_1}{\partial u_1},
\qquad
\dot u_1=\dfrac{\partial H_1}{\partial v_1},
\qquad
\dot v_2=-\dfrac{\partial H_2}{\partial u_2},
\qquad
\dot u_2=\dfrac{\partial H_2}{\partial v_2},
\]
that is,
\[
\dot v_1=-\omega u_1-\sqrt2\,a\,u_1^2,
\qquad
\dot u_1=v_1,
\]
\[
\dot v_2=-\omega u_2-\sqrt2\,a\,u_2^2,
\qquad
\dot u_2=v_2.
\]
Equivalently,
\[
\ddot u_1+\omega u_1+\sqrt2\,a\,u_1^2=0,
\qquad
\ddot u_2+\omega u_2+\sqrt2\,a\,u_2^2=0.
\]

\medskip
\noindent
\textbf{Quadratures.}
Fixing the values
\[
H_1=E,
\qquad
H_2=K,
\]
one obtains
\[
v_1^2=2 E-\omega u_1^2-\dfrac{2\sqrt2\,a}{3}u_1^3,
\qquad
v_2^2=2 K-\omega u_2^2-\dfrac{2\sqrt2\,a}{3}u_2^3.
\]
Since \(\dot u_i=v_i\), the motion is given by the independent quadratures
\[
\int_{u_{1,0}}^{u_1}\dfrac{d\xi}{\sqrt{\,2 E-\omega \xi^2-\dfrac{2\sqrt2\,a}{3}\xi^3\,}}
=
t-t_0,
\]
\[
\int_{u_{2,0}}^{u_2}\dfrac{d\xi}{\sqrt{\,2 K-\omega \xi^2-\dfrac{2\sqrt2\,a}{3}\xi^3\,}}
=
t-t_0.
\]
Here \(u_{1,0}\), \(u_{2,0}\), and \(t_0\) are arbitrary reference values; equivalently, one may write indefinite integrals and absorb the lower limits into additive integration constants. In terms of the original integrals,
\[
E =\dfrac{H_{\mathrm{SK}}+F_{\mathrm{SK}}}{2}
=\dfrac12H_{\mathrm{SK}}+\dfrac16K_{\mathrm{SK},0},
\qquad
K =\dfrac{H_{\mathrm{SK}}-F_{\mathrm{SK}}}{2}
=\dfrac12H_{\mathrm{SK}}-\dfrac16K_{\mathrm{SK},0}.
\]
\begin{remark}
For the generalized Sawada--Kotera Hamiltonian
\[
H_{\mathrm{SK}}
=
\dfrac12\bigl(p_x^2+p_y^2\bigr)
+\dfrac{\omega}{2}(x^2+y^2)
+a\,x y^2+\dfrac{a}{3}x^3+\dfrac{\mu}{2y^2},
\qquad \mu\neq0,
\]
the polynomial bi-Hamiltonian construction does not extend verbatim. Indeed, the cubic quantity
\[
K_{\mathrm{SK},0}=3p_xp_y+3\omega xy+3a\,x^2y+a\,y^3
\]
is no longer a first integral, since
\[
\{H_{\mathrm{SK}},K_{\mathrm{SK},0}\}_0=-\dfrac{3\mu p_x}{y^3},
\]
and the true additional invariant is the quartic quantity
\[
K_{\mathrm{SK}}
=
K_{\mathrm{SK},0}^{\,2}
+\mu\left(\dfrac{9p_x^2}{y^2}+12ax\right).
\]
Thus the codistribution is generated by a quadratic Hamiltonian and a quartic integral, rather than by a quadratic Hamiltonian and a cubic one.

Moreover, the singular term \(\mu/(2y^2)\) introduces rational expressions into the Hamilton equations and into the invariance conditions for the codistribution, so that any plausible ansatz for the deformation vector field \(X\) must itself contain rational terms. At the same time, the rotated variables
\[
u_1=\dfrac{x+y}{\sqrt2},
\qquad
u_2=\dfrac{x-y}{\sqrt2},
\qquad
v_1=\dfrac{p_x+p_y}{\sqrt2},
\qquad
v_2=\dfrac{p_x-p_y}{\sqrt2},
\]
no longer separate the Hamiltonian additively, since
\[
H_{\mathrm{SK}}
=
\dfrac12(v_1^2+v_2^2)
+\dfrac{\omega}{2}(u_1^2+u_2^2)
+\dfrac{\sqrt2\,a}{3}(u_1^3+u_2^3)
+\dfrac{\mu}{(u_1-u_2)^2}.
\]
Therefore, although one may still search for a second Poisson tensor of the form \(P_1=\mathcal L_XP_0\), the determining system for \(X\) becomes much more involved than in the case \(\mu=0\). For this reason, the generalized case is usually treated through a different canonical representation, from which the separated variables and the quadratures are then reconstructed.
\end{remark}

\end{document}